\documentclass[aps,nopacs,twocolumn,10pt]{revtex4}

\usepackage{amsmath}
\usepackage[titletoc,title]{appendix}
\usepackage{graphicx,epic,eepic,epsfig,amsmath,latexsym,amssymb,verbatim,color}
\usepackage{dsfont}
\usepackage{MnSymbol}

 \usepackage{tikz}

 \usepackage{hyperref}
\hypersetup{colorlinks=true,citecolor=blue,linkcolor=blue,filecolor=blue,urlcolor=blue,breaklinks=true}

\usepackage[marginal]{footmisc}
\usepackage{url}
\usepackage{theorem}
\newtheorem{definition}{Definition}
\newtheorem{proposition}[definition]{Proposition}
\newtheorem{lemma}[definition]{Lemma}

\newtheorem{theorem}[definition]{Theorem}

\def\squareforqed{\hbox{\rlap{$\sqcap$}$\sqcup$}}
\def\qed{\ifmmode\squareforqed\else{\unskip\nobreak\hfil
\penalty50\hskip1em\null\nobreak\hfil\squareforqed
\parfillskip=0pt\finalhyphendemerits=0\endgraf}\fi}
\def\endenv{\ifmmode\;\else{\unskip\nobreak\hfil
\penalty50\hskip1em\null\nobreak\hfil\;
\parfillskip=0pt\finalhyphendemerits=0\endgraf}\fi}
\newenvironment{proof}{\noindent \textbf{{Proof~} }}{\qed}
\newenvironment{remark}{\noindent \textbf{{Remark~}}}{}

\mathchardef\ordinarycolon\mathcode`\:
\mathcode`\:=\string"8000
\def\vcentcolon{\mathrel{\mathop\ordinarycolon}}
\begingroup \catcode`\:=\active
  \lowercase{\endgroup
  \let :\vcentcolon
  }

\newcommand{\nc}{\newcommand}
\nc{\rnc}{\renewcommand}
\nc{\beg}{\begin{equation}}
\nc{\eeq}{{\end{equation}}}
\nc{\beqa}{\begin{eqnarray}}
\nc{\eeqa}{\end{eqnarray}}
\nc{\lbar}[1]{\overline{#1}}
\nc{\bra}[1]{\langle#1|}
\nc{\ket}[1]{|#1\rangle}
\nc{\ketbra}[2]{|#1\rangle\!\langle#2|}
\nc{\braket}[2]{\langle#1|#2\rangle}

\nc{\proj}[1]{| #1\rangle\!\langle #1 |}
\nc{\avg}[1]{\langle#1\rangle}
\nc{\Rank}{\operatorname{Rank}}
\nc{\smfrac}[2]{\mbox{$\frac{#1}{#2}$}}
\nc{\tr}{\operatorname{Tr}}
\nc{\ox}{\otimes}
\nc{\dg}{\dagger}
\nc{\dn}{\downarrow}
\nc{\cA}{{\cal A}}
\nc{\cB}{{\cal B}}
\nc{\cC}{{\cal C}}
\nc{\cD}{{\cal D}}
\nc{\cE}{{\cal E}}
\nc{\cF}{{\cal F}}
\nc{\cG}{{\cal G}}
\nc{\cH}{{\cal H}}
\nc{\cI}{{\cal I}}
\nc{\cJ}{{\cal J}}
\nc{\cK}{{\cal K}}
\nc{\cL}{{\cal L}}
\nc{\cM}{{\cal M}}
\nc{\cN}{{\cal N}}
\nc{\cO}{{\cal O}}
\nc{\cP}{{\cal P}}
\nc{\cQ}{{\cal Q}}
\nc{\cR}{{\cal R}}
\nc{\cS}{{\cal S}}
\nc{\cT}{{\cal T}}
\nc{\cV}{{\cal V}}
\nc{\cX}{{\cal X}}
\nc{\cY}{{\cal Y}}
\nc{\cZ}{{\cal Z}}
\nc{\cW}{{\cal W}}
\nc{\csupp}{{\operatorname{csupp}}}
\nc{\qsupp}{{\operatorname{qsupp}}}
\nc{\var}{{\operatorname{var}}}
\nc{\rar}{\rightarrow}
\nc{\lrar}{\longrightarrow}
\nc{\polylog}{{\operatorname{polylog}}}
\nc{\1}{{\mathds{1}}}
\nc{\wt}{{\operatorname{wt}}}
\nc{\av}[1]{{\left\langle {#1} \right\rangle}}
\nc{\supp}{{\operatorname{supp}}}

\nc{\RR}{{{\mathbb R}}}
\nc{\CC}{{{\mathbb C}}}
\nc{\FF}{{{\mathbb F}}}
\nc{\NN}{{{\mathbb N}}}
\nc{\ZZ}{{{\mathbb Z}}}
\nc{\PP}{{{\mathbb P}}}
\nc{\QQ}{{{\mathbb Q}}}
\nc{\UU}{{{\mathbb U}}}
\nc{\EE}{{{\mathbb E}}}
\nc{\id}{{\operatorname{id}}}

\nc{\CHSH}{{\operatorname{CHSH}}}

\nc{\be}{\begin{equation}}
\nc{\ee}{{\end{equation}}}
\nc{\bea}{\begin{eqnarray}}
\nc{\eea}{\end{eqnarray}}
\nc{\<}{\langle}
\rnc{\>}{\rangle}
\nc{\Hom}[2]{\mbox{Hom}(\CC^{#1},\CC^{#2})}
\nc{\rU}{\mbox{U}}

\nc{\ob}[1]{#1}

\nc{\SEP}{{\text{SEP}}}
\nc{\NS}{{\text{NS}}}
\nc{\LOCC}{{\text{LOCC}}}
\nc{\PPT}{{\text{PPT}}}
\nc{\EXT}{{\text{EXT}}}
\nc{\Sym}{{\operatorname{Sym}}}


\nc{\ERLO}{{E_{\text{r,LO}}}}
\nc{\ERLOCC}{{E_{\text{r,LOCC}}}}
\nc{\ERPPT}{{E_{\text{r,PPT}}}}
\nc{\ERLOCCinfty}{{E^{\infty}_{\text{r,LOCC}}}}
\nc{\Aram}{{\operatorname{\sf A}}}


\begin{document}
\title{Approximate broadcasting of quantum correlations}

\author{Wei Xie$^{1}$}
\email{xievvvei@gmail.com}
\author{Kun Fang$^{1}$}
\email{kun.fang-1@student.uts.edu.au}
\author{Xin Wang$^{1}$}
\email{xin.wang-8@student.uts.edu.au}
\author{Runyao Duan$^{1,2}$}
\email{runyao.duan@uts.edu.au}
\affiliation{$^1$Centre for Quantum Software and Information, Faculty of Engineering and Information Technology, University of Technology Sydney, NSW 2007, Australia}
\affiliation{$^2$UTS-AMSS Joint Research Laboratory for Quantum Computation and Quantum Information Processing, Academy of Mathematics and Systems Science, Chinese Academy of Sciences, Beijing 100190, China}

\begin{abstract}
Broadcasting quantum and classical information is a basic task in quantum information processing, and is also a useful model in the study of quantum correlations including quantum discord.  We establish a full operational characterization of two-sided quantum discord in terms of bilocal broadcasting of quantum correlations. Moreover, we show that both the optimal fidelity of unilocal broadcasting of the correlations in an arbitrary bipartite quantum state and that of broadcasting an arbitrary set of quantum states can be formulized as semidefinite programs (SDPs), which are efficiently computable. We also analyze some properties of these SDPs and evaluate the broadcasting fidelities for some cases of interest.
\end{abstract}
\date{\today}
\maketitle 

\section{Introduction}
Copying information is a rather simple task in the classical realm, but unfortunately not in the quantum realm. It is not allowed to create an identical copy of an arbitrary unknown pure quantum state due to the no-cloning theorem \cite{wootters1982single,dieks1982communication}. One can clone a set of pure states if and only if they are orthogonal. The no-broadcasting theorem \cite{barnum1996noncommuting} generalizes this result to mixed states, saying that a set of quantum states can be broadcast if and only if the states commute with each other.

These no-go theorems can be further extended to the setting of local broadcast for composite quantum systems. Given a bipartite quantum state $\rho_{AB}$ shared by Alice and Bob, their objective is to perform local operations only (without communication)  to produce a state $\widehat{\rho}_{A_1A_2B_1B_2}=(\Lambda_{A\to A_1A_2}\ox \Gamma_{B\to B_1B_2})\rho_{AB}$ such that $\tr_{A_1B_1}\widehat{\rho}_{A_1A_2B_1B_2}=\tr_{A_2B_2}\widehat{\rho}_{A_1A_2B_1B_2}=\rho_{AB}$ (see Section \ref{section:prelim} for notational convention). 
It is shown in \cite{piani2008no} that this task can only be performed if $\rho_{AB}$ is classically correlated. Even if the task is relaxed to obtain two bipartite states with the same correlation as $\rho_{AB}$ (measured by the mutual information), it is feasible to do the task if and only if the given state $\rho_{AB}$ is classically correlated. This is called the no-local-broadcasting theorem \cite{piani2008no}. Furthermore, when the local operations ar only allowed for one party (e.g., Alice), the task can be done if and only if $\rho_{AB}$ is classical on $A$ \cite{luo2010decomposition,luo2010quantum,piani2017local}.

When the task of perfect broadcasting cannot be accomplished, it is natural to ask whether the broadcasting can be performed in an approximate fashion, and how to design the optimal broadcasting operation. We shall study the approximate broadcasting of states and correlations by utilizing semidefinite programs (SDPs). In Ref. \cite{piani2016hierarchy} the Bose-symmetric channel is considered as unilocal broadcasting operation and an SDP is derived for this problem. Semidefinite programming optimization techniques \cite{boyd2004convex} have found many applications to the theory of quantum information and computation (see, e.g.,  \cite{Doherty2002,Rains2001,Wang2016,jain2011qip,wang2016irreversibility,Berta2015,Li2017,kempe2010unique,wang2016semidefinite}), and also to the study of quantum correlations (see, e.g., \cite{skrzypczyk2014quantifying,navascues2008convergent,piani2016hierarchy,Napoli2016}).

Quantum discord (see Section \ref{section:discord} for definition), as an indispensable measure of quantum correlation beyond entanglement, is introduced in \cite{ollivier2001quantum} and \cite{henderson2001classical} independently. It is argued \cite{datta2008quantum} that quantum discord is responsible for the quantum speed-up over classical algorithms. Quantum discord is a quite useful concept in many fields of quantum information processing, such as local broadcasting of correlations \cite{piani2008no,brandao2015generic}, quantum computing \cite{knill1998power}, quantum data hiding \cite{piani2014quantumness}, quantum data locking \cite{boixo2011quantum}, entanglement distribution \cite{chuan2012quantum,streltsov2012quantum}, common randomness distillation \cite{devetak2004distilling}, quantum state merging \cite{cavalcanti2011operational,madhok2011interpreting,madhok2013quantum}, entanglement distillation \cite{madhok2013quantum,streltsov2011linking}, superdense coding \cite{madhok2013quantum}, quantum teleportation \cite{madhok2013quantum}, quantum metrology \cite{girolami2014quantum}, and quantum cryptography \cite{pirandola2014quantum}. Quantum discord has become an active research topic over the past few years \cite{modi2012classical,adesso2016measures}.

The local broadcasting paradigm can provide a natural operational interpretation to quantum discord. Remarkably, the minimum average loss of mutual information resulting from local operation $\Lambda_{A\to A_1\cdots A_n}$ on $A$ for arbitrary quantum state $\rho_{AB}$ approaches the quantum discord $D_A(\rho_{AB})$ of $\rho_{AB}$ as $n$ goes to infinity. This result is established in Ref. \cite{brandao2015generic} and it generalizes the work in  Ref. \cite{streltsov2013quantum} which considers pure states $\rho_{AB}$ only.  However, it remains open whether there is an analogous connection for the two-sided setting of redistributing correlations \cite{adesso2016measures}.

In this paper, we study the approximate broadcasting of quantum correlations in both asymptotic and non-asymptotic settings. In the asymptotic regime, we rigorously prove the conjecture in Ref. \cite{adesso2016measures} and show an operational meaning of the two-sided discord in terms of bilocal broadcasting of correlations; that is, the asymptotic minimum average loss of correlation after optimal bilocal broadcasting is exactly the two-sided quantum discord of the initial state. In the non-asymptotic regime, we give an alternate derivation for the SDP characterization of the optimal unilocal broadcasting fidelity and show that the universal quantum clone machine (UQCM) can also serve as the optimal universal unilocal broadcasting operation. Moreover, the optimal state-dependent unilocal broadcasting operation for pure two-qubit states is analytically solved. Similarly, we establish the SDP for the optimal broadcasting fidelity of a finite set of quantum states.

\section{Preliminaries}\label{section:prelim}
A quantum system $A$ is associated to a Hilbert space $\cH_A$ of dimension $|A|$ with some fixed orthonormal basis $\{\ket{j}_A\}_j$. In this work, we only deal with finite-dimensional spaces, and the spaces of systems with the same letter are always assumed to be isomorphic, for example, $\cH_A\cong\cH_{\widetilde A}\cong\cH_{A_1}\cong\cH_{A_2}$. The linear operators from $\cH_A$ to $\cH_B$ are always written with subscripts identifying the systems involved, for example, $X_{A\to B}$. We denote $\cS (A)$ as the set of density operators \cite{Nielsen:2011:QCQ:1972505} on system $A$.

A quantum operation (or channel) $\cE_{A\to B}$ with input system $A$ and output system $B$ is a completely positive (CP), trace preserving (TP) linear map from the linear operators on $\cH_A$ to the linear operators on $B$. A quantum-to-classical channel $\cF$ is a cptp map such that $\cF(\cdot)=\sum_{j}\tr(M_j \cdot)\ketbra{j}{j}$, where $\{M_j\}_j$ is a POVM. The set of all quantum-to-classical channels is denoted by QC. Since the subsript of an operator or operation specifies its input and output systems, we can write a product of operators or operations without the $\otimes$ symbol, and omit the identity operator or operation $\1$, which would make no confusion, for example, $X_{AB}Y_{BC}\equiv (X_{AB}\ox \1_C)(\1_A\ox Y_{BC})$ and $\cE_{B\to C}(X_{AB})\equiv (\1_A\ox\cE_{B\to C})X_{AB}$. 

The Choi-Jamio{\l}kowski matrix \cite{jamiolkowski1972linear,choi1975completely} of a quantum operation $\cE_{A\to B}$ is $J_{\cE}=(\1_{\widetilde A\to \widetilde A}\ox\cE_{A\to B})\phi_{\widetilde A A}$, where $\phi_{\widetilde A A}=\sum_{ij}\ketbra{ii}{jj}$ is the unnormalized maximally entangled state. The output of the channel $\cE_{A\to B}$ with input $\rho_A$ can be recovered from $J_{\cE}$ by $\cE_{A\to B}(\rho_A)=\tr_A(J_{\cE}^{T_A}\rho_A)$, where $T_A$ denotes the partial transpose on $A$.

We use $H(\cdot)$ to denote the von Neumann entropy of quantum states, $H(A|B):=H(AB)-H(B)$ the conditional quantum entropy, $I(A:B):=H(A)+H(B)-H(AB)$ the quantum mutual information. The fidelity $F(\rho,\sigma)=\tr\sqrt{\sqrt{\rho}\sigma\sqrt{\rho}}$, as a measure of similarity between quantum states, can be viewed as the optimal solution to an SDP \cite{killoran2012entanglement,watrous2012simpler}. The diamond norm can be used to give the distance of two quantum operations $\cE,\cF$, that is, $\|\cE-\cF\|_\diamond=\sup\{\|((\cE-\cF)\otimes \1)X\|_1:\|X\|_1=1\}$, where $\|\cdot\|_1$ is the trace norm. In addition, we denote $[n]=\{1,\dots,n\}$, and denote by $|\cdot|$ the cardinality of a set or the dimension of a linear space.

\begin{figure}[ht] 
\hspace{-1.cm}
\begin{minipage}{.3\textwidth}
\includegraphics[scale=0.75]{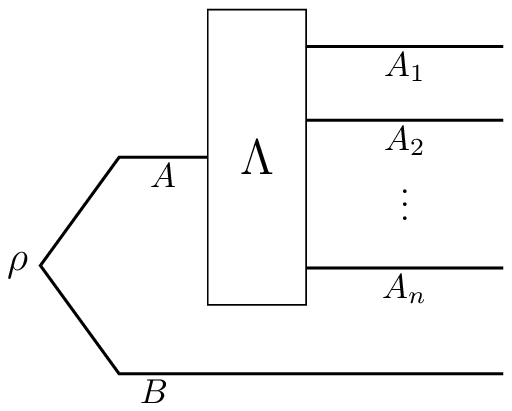}
\label{unilocal-broadcast}
\end{minipage}%
\hspace{-0.5cm}
\begin{minipage}{.25\textwidth}
\includegraphics[scale=0.75]{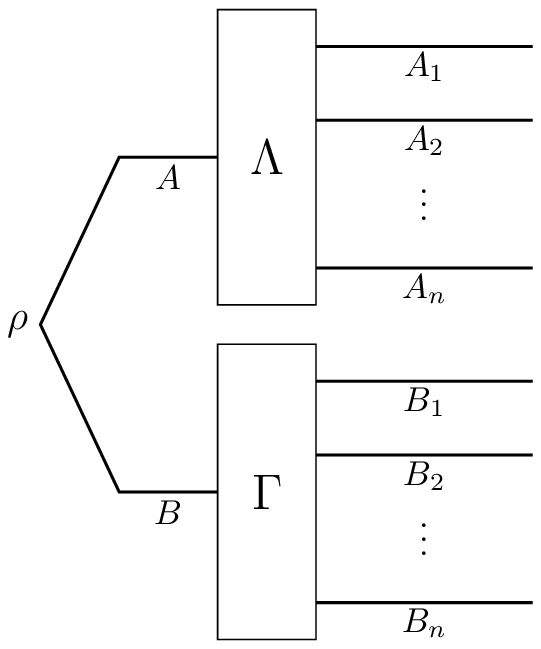}
\label{bilocal-broadcast}
\end{minipage}
\caption{Unilocal (left) and bilocal (right) broadcasting of quantum correlations in initial state $\rho^{}_{AB}$. The objective is for the quantum channels $\Lambda,\Gamma$ to make the states on $A_iB$ or $A_iB_i$ as close to $\rho_{AB}^{}$ as possible measured in some way.}
\label{fig:broadcast}
\end{figure}

\section{Asymptotic bilocal broadcasting and two-sided quantum discord}\label{section:discord}

\subsection{Previous results}
The concept of quantum discord was introduced by \cite{ollivier2001quantum,henderson2001classical}. Here the one-sided and two-sided quantum discord of a bipartite state $\rho_{AB}$ are defined by
\begin{equation}\label{def-discord1} 
D_A(\rho_{AB})  := \min_{\cE_A\in\text{QC}} (I(A:B)_{\rho_{AB}}-I(A:B)_{(\cE_A\ox\1_B) \rho_{AB}}), 
\end{equation}
\begin{equation} \label{def-discord2}
D_{AB}(\rho_{AB}) := \min_{\cE_A,\cF_B\in\text{QC}} (I(A:B)_{\rho_{AB}}-I(A:B)_{(\cE_A\ox\cF_B) \rho_{AB}}),
\end{equation}
respectively.

It is shown in \cite{brandao2015generic} that the one-sided quantum discord is equal to the asymptotic average loss of correlation after the optimal broadcasting operation. Consider the following scenario. Alice and Bob, apart away from each other, share a bipartite quantum state $\rho_{AB}$. The information, or correlation, shared by them is measured by quantum mutual information in what follows. The goal of Alice is to broadcast the mutual information between them to many, say $n$, recipients, using local operation only. If the state is not classical on $A$, she cannot perform the task perfectly \cite{luo2010decomposition,luo2010quantum,piani2017local}, and the mutual information between each recipient and Bob would decrease in general. Now her task naturally becomes to design a broadcasting operation in order to minimize the average loss of mutual information. Remarkably, the minimal average loss of correlation approaches quantum discord $D_A$ of $\rho_{AB}$ as $n$ tends to infinity, as revealed in the following Proposition.

\begin{proposition}[\cite{brandao2015generic}]
Let $\rho_{AB}$ be a bipartite state and $D_A$ is defined by Eq. (\ref{def-discord1}). Let $\Lambda_{A\to A_1\dots A_n}$ be a cptp map and $\Lambda_j:=\tr_{\backslash A_j}\circ \Lambda$. Then
\begin{equation*}\begin{split}
&D_{A}(\rho_{AB}) \\
= &\lim_{n\to\infty}\min_{\Lambda_{A\to A_1\dots A_n}} \frac{1}{n}\sum_{j=1}^n \left(I(A:B)_{\rho_{AB}}-I(A_j:B)_{(\Lambda_j\ox\1_B)\rho_{AB}}\right).
\end{split}\end{equation*}
\end{proposition}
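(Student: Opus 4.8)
Write $\sigma:=(\Lambda_{A\to A_1\cdots A_n}\ox\1_B)\rho_{AB}$, so that $(\Lambda_j\ox\1_B)\rho_{AB}=\sigma_{A_jB}$, and abbreviate the one-way classical correlation by $J_A(\rho_{AB}):=\max_{\cE_A\in\text{QC}}I(A:B)_{(\cE_A\ox\1_B)\rho_{AB}}=I(A:B)_{\rho_{AB}}-D_A(\rho_{AB})$. Subtracting from $I(A:B)_{\rho_{AB}}$, the asserted identity is equivalent to $\lim_{n\to\infty}\max_{\Lambda}\frac1n\sum_{j=1}^n I(A_j:B)_{\sigma}=J_A(\rho_{AB})$, and I would establish the two bounds separately. \emph{Achievability.} Let $\cE_A^\star\in\text{QC}$, with POVM $\{M_x\}_x$, attain the minimum in (\ref{def-discord1}), so that $I(A:B)_{(\cE_A^\star\ox\1_B)\rho_{AB}}=J_A(\rho_{AB})$; take $\Lambda$ to measure $\{M_x\}$ on $A$ and copy the outcome into all $n$ output registers: $\Lambda(\xi_A)=\sum_x\tr(M_x\xi_A)\,\proj{x}^{\ox n}$. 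Then $\Lambda_j=\tr_{\backslash A_j}\!\circ\Lambda=\cE_A^\star$ for every $j$, whence $I(A_j:B)_{\sigma}=J_A(\rho_{AB})$ for every $j$ and every $n$, so that $\frac1n\sum_j I(A_j:B)_{\sigma}=J_A(\rho_{AB})$ already at finite $n$ and the ``$\ge$'' half is done.

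\emph{Converse.} Here one must show $\limsup_{n}\max_{\Lambda}\frac1n\sum_j I(A_j:B)_{\sigma}\le J_A(\rho_{AB})$, and two structural facts are available. (i) Composing any quantum-to-classical map on the registers $A_1\cdots A_m$ with $\Lambda$ is again a quantum-to-classical map on $A$, so $J_{A_{[m]}}(\sigma_{A_{[m]}B})\le J_A(\rho_{AB})$ for all $m\le n$; in particular, picking $j$ uniformly and then measuring $A_j$ optimally is such a map and gives $\frac1n\sum_j J_{A_j}(\sigma_{A_jB})\le J_A(\rho_{AB})$. (ii) The total correlation is uniformly bounded, $I(A_1\cdots A_n:B)_{\sigma}\le 2\log|B|$, so by the chain rule $\sum_j I(A_j:B\,|\,A_1\cdots A_{j-1})_{\sigma}=I(A_1\cdots A_n:B)_{\sigma}$ the conditional terms are $O(1/n)$ for all but a vanishing fraction of $j$. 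The crux is to upgrade (i), which controls the \emph{measured} correlations $J_{A_j}(\sigma_{A_jB})$, to a bound on the \emph{unmeasured} average $\frac1n\sum_j I(A_j:B)_{\sigma}$. Writing $A_{<j}:=A_1\cdots A_{j-1}$, I would argue that for a typical $j$ the smallness of $I(A_j:B\,|\,A_{<j})_{\sigma}$ makes $A_j$ approximately independent of $B$ conditioned on $A_{<j}$, so by a recoverability estimate $\sigma_{A_{<j}A_jB}$ is close in trace norm to a state obtained from $\sigma_{A_{<j}B}$ by a channel on $A_{<j}$ alone; data processing then gives $I(A_j:B)_{\sigma}\lesssim I(A_{<j}:B)_{\sigma}$ up to a continuity error whose sum over $j$, by concavity of $t\mapsto\sqrt{t}\,\log(1/t)$ and (ii), is $O(\sqrt{n}\,\log n)=o(n)$. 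Combining this with (i) and a Ces\`aro-type averaging would then drive $\frac1n\sum_j I(A_j:B)_{\sigma}$ down towards $J_A(\rho_{AB})+o(1)$.

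The main obstacle is making this last step work \emph{uniformly over all} $\Lambda$. The crude bound $I(A_j:B)_{\sigma}\lesssim I(A_{<j}:B)_{\sigma}$ discards the gap $D_{A_{<j}}(\sigma_{A_{<j}B})=I(A_{<j}:B)_{\sigma}-J_{A_{<j}}(\sigma_{A_{<j}B})$, which need not be small, so one must trade this leftover discord of the conditioning registers against the shrinkage of $I(A_j:B)_{\sigma}$ forced by monogamy of quantum correlations. That quantitative balancing is precisely the content of the no-local-broadcasting / quantum-Darwinism analysis of \cite{streltsov2013quantum} (pure states) and \cite{brandao2015generic} (general states), which I would invoke to finish the converse; together with achievability it shows that the limit exists and equals $D_A(\rho_{AB})$.
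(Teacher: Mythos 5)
First, note that the paper does not actually prove this proposition: it is imported verbatim from \cite{brandao2015generic}, and the only argument in the paper reflecting its proof strategy is the proof of the two-sided analogue, Theorem \ref{result}. Measured against that, your achievability half is correct and identical to what the paper does there: the measure-and-copy channel $\Lambda(\xi)=\sum_x\tr(M_x\xi)\proj{x}^{\ox n}$ attains $J_A(\rho_{AB})$ exactly at every finite $n$.

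The converse, however, has a genuine gap at exactly the point you flag. Your step (i) controls only the \emph{measured} correlations $\frac1n\sum_j J_{A_j}(\sigma_{A_jB})$; the missing ingredient is that the residual discords $D_{A_j}(\sigma_{A_jB})=I(A_j:B)_\sigma-J_{A_j}(\sigma_{A_jB})$ vanish for most $j$ as $n\to\infty$. The chain-rule/recoverability route you sketch does not deliver this: small $I(A_j:B\,|\,A_{<j})$ yields, via Fawzi--Renner recoverability and data processing, that $I(A_j:B)_\sigma\lesssim I(A_{<j}:B)_\sigma$, but as you yourself observe this discards the discord of the conditioning registers and cannot be iterated down to $J_A(\rho_{AB})$; there is no evident way to convert small conditional mutual information into small discord of the marginal $\sigma_{A_jB}$. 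What actually closes the argument --- and what the paper isolates as Lemma \ref{brandao} --- is a quantitatively different statement: for any $\Lambda_{A\to A_1\cdots A_n}$ there is a \emph{single} POVM $\{E_k\}$ such that, for all but a $\delta$-fraction of indices $j$, the marginal channel $\Lambda_j$ is within $O\bigl((n\delta^3)^{-1/3}\bigr)$ in diamond norm of a measure-and-prepare channel $\cE_j(\cdot)=\sum_k\tr(E_k\cdot)\sigma_{j,k}$. Given that, one replaces the prepared states $\sigma_{j,k}$ by orthogonal flags (which cannot decrease $I(A_j:B)$ beyond a local operation), uses data processing to get $I(A_j:B)_{(\widetilde\cE_j\ox\1)\rho}\le J_A(\rho_{AB})$, and transfers the bound back to $\Lambda_j$ by the continuity bound (\ref{cont-bound}) with $\delta=n^{-1/6}$. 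Since you ultimately ``invoke'' \cite{brandao2015generic} precisely for this monogamy step, your converse is a restatement of the claim rather than a proof of it; to make it self-contained you would need either to prove Lemma \ref{brandao} or to substitute a genuine quantitative monogamy-of-discord bound for the CMI argument.
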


\subsection{Operational interpretation of two-sided quantum discord}
We will give an operational interpretation of two-sided quantum discord in terms of bilocal broadcasting, analogous to the case of one-sided quantum discord (see Fig. \ref{fig:broadcast}).
\begin{theorem}
\label{result}
Let $\rho_{AB}$ be a bipartite state, and the two-sided quantum discord $D_{AB}(\rho_{AB})$ is defined by Eq. (\ref{def-discord2}). Let $\Lambda_{A\to A_1\dots A_n}$ and $\Gamma_{B\to B_1\dots B_n}$ be cptp maps, and denote $\Lambda_j:=\tr_{\backslash A_j}\circ \Lambda$ and $\Gamma_j:=\tr_{\backslash B_j}\circ \Gamma$. Then
\begin{equation*}\begin{split}
&D_{AB}(\rho_{AB}) \\
=& \lim_{n\to\infty}\min_{\substack{\Lambda_{A\to A_1\dots A_n} \\ \Gamma_{B\to B_1\dots B_n}}} \frac{1}{n}\sum_{j=1}^n \left(I(A:B)_{\rho_{AB}}-I(A_j:B_j)_{(\Lambda_j\ox\Gamma_j)\rho_{AB}}\right).
\end{split}\end{equation*}
\end{theorem}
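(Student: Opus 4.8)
The plan is to prove the two inequalities ``$\le$'' and ``$\ge$'' between $D_{AB}(\rho_{AB})$ and the right-hand side separately; since the explicit construction below already shows the minimum over $(\Lambda,\Gamma)$ is $\le D_{AB}(\rho_{AB})$ for \emph{every} $n$, the limit then exists and equals $D_{AB}(\rho_{AB})$.

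\emph{Achievability.} Let $\cE_A^\star,\cF_B^\star\in\text{QC}$ be optimal in (\ref{def-discord2}). Take $\Lambda_{A\to A_1\cdots A_n}$ to be ``apply $\cE_A^\star$, then write the classical outcome into each of $A_1,\dots,A_n$'', and $\Gamma_{B\to B_1\cdots B_n}$ the analogous channel built from $\cF_B^\star$. Then $(\Lambda_j\ox\Gamma_j)\rho_{AB}=(\cE_A^\star\ox\cF_B^\star)\rho_{AB}$ for every $j$, so every summand equals $I(A:B)_{\rho_{AB}}-I(A:B)_{(\cE_A^\star\ox\cF_B^\star)\rho_{AB}}=D_{AB}(\rho_{AB})$ and so does the average, uniformly in $n$.

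\emph{Converse.} It is equivalent to show that for \emph{every} sequence of bilocal broadcasting channels $\limsup_n\tfrac1n\sum_j I(A_j:B_j)_\sigma\le J_{AB}(\rho_{AB})$, where $\sigma=(\Lambda\ox\Gamma)\rho_{AB}$ and $J_{AB}(\rho):=I(A:B)_\rho-D_{AB}(\rho)=\max_{\cE_A,\cF_B\in\text{QC}}I(A:B)_{(\cE_A\ox\cF_B)\rho}$. (One cannot simply symmetrize $\Lambda,\Gamma$ over permutations, since the objective is not preserved under symmetrization; the argument must handle general channels.) The key observation is that for each $j$ the map $A\to X_{<j}$ given by ``apply $\Lambda$, measure $A_1,\dots,A_{j-1}$ with fixed POVMs, discard the rest'' is quantum-to-classical, as is the symmetric map $B\to Y_{<j}$, so by the variational formula for $J_{AB}$ one gets $I(X_{<j}:Y_{<j})\le J_{AB}(\rho_{AB})$. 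Inserting these measurements into $\sigma$ and expanding by the chain rule yields $I(A_j:B_j)_\sigma\le I(X_{<j}:Y_{<j})+c_j$ with $c_j:=I(X_{<j}:B_j\mid Y_{<j})+I(A_j:Y_{<j}\mid X_{<j})+I(A_j:B_j\mid X_{<j}Y_{<j})$, the conditional mutual informations evaluated in the partially measured state. Averaging over $j$ reduces the converse to showing $\tfrac1n\sum_j c_j\to0$.

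The estimate $\tfrac1n\sum_j c_j\to0$ is the crux and I expect it to be the main obstacle. The easy half: data processing gives $I(A^n:B^n)_\sigma\le I(A:B)_{\rho_{AB}}$, so by the chain rule the sum over $j$ of the \emph{quantum-conditioned} analogues $I(B_{<j}:B_j\mid A_{<j})$, $I(A_j:B_{<j}\mid A_{<j})$, $I(A_j:B_j\mid A_{<j}B_{<j})$ is bounded by a constant independent of $n$, hence its average vanishes. The genuine difficulty is that $c_j$ conditions on the \emph{measured} registers $X_{<j},Y_{<j}$ rather than the quantum registers, and coarsening the conditioning system can increase conditional mutual information; controlling this requires choosing the measurements appropriately (sufficiently fine, or adaptive) together with a de Finetti--type continuity argument exploiting that $A$ and $B$ have fixed dimension: the part of each fragment's correlation with its partner that is not classically redundant is monogamous and therefore concentrated on only $o(n)$ of the indices $j$. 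This is precisely the mechanism underlying the one-sided result of Ref.~\cite{brandao2015generic}; here it must be run on both parties simultaneously and combined with the two-sided variational bound $I(X_{<j}:Y_{<j})\le J_{AB}$. Finally, since $I(A_j:B_j)_\sigma\le 2\log\min\{|A|,|B|\}$ uniformly, continuity of the von Neumann entropy lets one pass to the limit, giving $\liminf_n\tfrac1n\sum_j\bigl(I(A:B)_{\rho_{AB}}-I(A_j:B_j)_\sigma\bigr)\ge D_{AB}(\rho_{AB})$ and, with the achievability part, the theorem.
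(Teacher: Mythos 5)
Your achievability direction is correct and is exactly the paper's (write the outcomes of the optimal POVMs $\{M_i\},\{N_i\}$ into all $n$ registers, so every summand equals $D_{AB}(\rho_{AB})$). The converse, however, has a genuine gap precisely where you flag it: the claim $\tfrac1n\sum_j c_j\to 0$ is asserted and motivated but never proved. Your chain-rule decomposition $I(A_j:B_j)_\sigma\le I(X_{<j}:Y_{<j})+c_j$ and the bound $I(X_{<j}:Y_{<j})\le J_{AB}(\rho_{AB})$ are both fine, but the three conditional mutual informations in $c_j$ are conditioned on the \emph{measured} registers $X_{<j},Y_{<j}$, and --- as you yourself observe --- there is no monotonicity that transfers the vanishing of the quantum-conditioned chain-rule terms (which does follow from $I(A^n{:}B^n)\le I(A{:}B)_\rho$) to these coarsened quantities. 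Appealing to ``the mechanism underlying the one-sided result'' is not a proof; that mechanism is a specific quantitative statement, and without it the converse direction is not established.

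The paper sidesteps this bookkeeping entirely by importing the one-sided machinery in the form of a channel-level de Finetti/monogamy lemma (Lemma \ref{brandao}, quoted from Ref.~\cite{brandao2015generic}): for any $\Lambda_{A\to A_1\cdots A_n}$ and any $0<\delta<1$ there exist a \emph{single} POVM $\{E_k\}$ and a set $S$ of at least $n(1-\delta)$ indices such that for every $j\in S$ the marginal $\Lambda_j$ is within $3\bigl(\ln(2)\,|A|^6\log_2|A|/(n\delta^3)\bigr)^{1/3}$ in diamond norm of a measure-and-prepare channel $\cE_j(\cdot)=\sum_k\tr(E_k\cdot)\sigma_{j,k}$. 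Applying this to $\Lambda$ and $\Gamma$ separately yields a common set of at least $n(1-2\delta)$ indices on which $\Lambda_j\ox\Gamma_j$ is $\varepsilon$-close to $\cE_j\ox\cF_j$; the Fannes--Audenaert and Alicki--Fannes continuity bounds then control $I(A_j:B_j)$ under the true channels by its value under $\cE_j\ox\cF_j$ up to $O(\varepsilon\log|A|)$ terms, and since preparing $\sigma_{j,k}$ from the classical outcome $k$ is local post-processing of a QC channel, that value is at most $I_c(A:B)=\max_{\cE,\cF\in\mathrm{QC}}I(A:B)_{(\cE\ox\cF)\rho}$. The bad indices contribute $O(\delta\log d')$ to the average, and choosing $\delta=n^{-1/6}$ finishes. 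If you want to complete your write-up, the shortest route is to state and invoke that lemma explicitly on both parties rather than to try to control your $c_j$'s; also, your parenthetical worry about symmetrization is moot, since no symmetrization is used in this part of the argument.
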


In order to prove this theorem, we need the following result.

\begin{lemma}[\cite{brandao2015generic}]\label{brandao}
Let $\Lambda:\cS(A)\to\cS(A_1\ox\cdots\ox A_n)$ be a cptp map. Denote $\Lambda_j:=\tr_{\backslash A_j}\circ\Lambda$, and fix a number $0<\delta<1$. Then there exsits a POVM $\{E_k\}_k$ and a set $S\subset[n]$ with $|S|\ge n(1-\delta)$ such that for all $j\in S$,
\begin{equation}\label{111}
\|\Lambda_j-\cE_j\|_{\diamond} \le 3\left(\frac{\ln (2) |A|^6 \log_2 |A|}{n\delta^3}\right)^{1/3},
\end{equation}
with $\cE_j(\cdot):=\sum_k \tr(E_k\cdot) \sigma_{j,k}$ for states $\sigma_{j,k}\in\cS(A_j)$. Here $|A|$ is the dimension of the space $A$.
\end{lemma}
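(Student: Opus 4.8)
The plan is to work in the Choi--Jamio{\l}kowski picture and reduce the claim to a statement about bipartite correlations inside a single multipartite state. First I would send half of the normalized maximally entangled state $\Phi_{\widetilde A A}$ through $\Lambda$ to form $\Omega_{\widetilde A A_1\cdots A_n}=(\1_{\widetilde A}\ox\Lambda)\Phi_{\widetilde A A}$, so that the reduced state $\Omega_{\widetilde A A_j}$ is exactly the normalized Choi operator of $\Lambda_j$. A channel is measure-and-prepare precisely when its Choi operator is separable across $\widetilde A\,{:}\,A_j$, and the target form $\cE_j(\cdot)=\sum_k\tr(E_k\cdot)\sigma_{j,k}$ corresponds to the separable operator $\tfrac1{|A|}\sum_k E_k^{T}\ox\sigma_{j,k}$; crucially the measurement $\{E_k\}$ lives on the $\widetilde A$ factor and must be common to all $j\in S$. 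Using the standard dictionary between trace distance of Choi operators and diamond distance of channels (which costs a factor of order $|A|$), the target inequality becomes the statement that, for most $j$, the operator $\Omega_{\widetilde A A_j}$ is close in trace norm to a separable operator whose $\widetilde A$-part is drawn from one fixed ensemble.

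The information-theoretic engine is a correlation budget. Since $\Omega_{\widetilde A}=\1/|A|$, the total correlation obeys $I(\widetilde A:A_1\cdots A_n)_\Omega\le 2H(\widetilde A)=2\log_2|A|$. The chain rule $I(\widetilde A:A_1\cdots A_n)=\sum_{j=1}^n I(\widetilde A:A_j\,|\,A_1\cdots A_{j-1})$ then bounds the average conditional mutual information by $2\log_2|A|/n$, and a Markov-inequality argument produces a set $S\subset[n]$ with $|S|\ge n(1-\delta)$ on which the residual correlation of $A_j$ with the reference, conditioned on the earlier outputs, is at most of order $\log_2|A|/(n\delta)$.

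Next I would manufacture the common input POVM and convert small residual correlation into channel closeness. The idea is to measure a block of output systems with an informationally complete POVM $\{M_l\}$ and pull this measurement back through $\Lambda$ to the input, i.e.\ set $\tr(E_l\rho):=\tr[M_l\,(\tr_{\backslash\text{block}}\Lambda)(\rho)]$; this gives a single POVM on $A$ shared by every $j$ outside the block, with the prepared states $\sigma_{j,l}$ read off from $\Omega$ conditioned on outcome $l$. Measuring the block renders the conditioning register classical, so a measured version of Pinsker's inequality, $\|\omega-\text{(product)}\|_1\le\sqrt{2\ln2\,I}$, turns the residual conditional mutual information into trace-distance closeness of $\Omega_{\widetilde A A_j}$ to the separable form $\tfrac1{|A|}\sum_l E_l^{T}\ox\sigma_{j,l}$. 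Translating back through the Choi-to-diamond dictionary and collecting the $\ln 2$, the entropic $\log_2|A|$, and the dimensional factors from the informationally complete measurement gives a bound of the advertised shape.

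The main obstacle is reconciling the \emph{common} measurement with the \emph{$j$-dependent} conditioning that comes out of the chain rule: the clean telescoping term $I(\widetilde A:A_j\,|\,A_{<j})$ conditions on exactly the systems preceding $j$, whereas a shared POVM must condition on one fixed block. I would resolve this by choosing the measured block (a fixed number $\ell$ of outputs, or a random subset) independently of the particular good index and absorbing the mismatch into the correlation budget, at the cost of extra factors of $|A|$ and $\delta$. The block size $\ell$ then plays off two competing effects --- a larger block drives the residual correlation down but inflates the dimension of the informationally complete measurement and depletes the pool of usable indices --- and optimizing this trade-off is exactly what produces the cube-root exponent and the $|A|^6\log_2|A|$ dependence in (\ref{111}). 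Making these dimension factors and the $\delta^{3}$ scaling line up is the delicate part; everything else is the standard Choi/Pinsker bookkeeping.
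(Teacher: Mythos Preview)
The paper does not supply a proof of this lemma; it is quoted from \cite{brandao2015generic} and used as a black box in the proof of Theorem~\ref{result}. There is therefore nothing in the present paper to compare your attempt against.

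That said, your sketch is faithful to the argument in the cited reference. Passing to the normalized Choi state $\Omega_{\widetilde A A_1\cdots A_n}$, bounding $I(\widetilde A:A_1\cdots A_n)\le 2\log_2|A|$, telescoping via the chain rule, and then applying a Markov count to extract $S$ with $|S|\ge n(1-\delta)$ is exactly how the proof begins. You have also correctly isolated the one genuinely nontrivial step: the chain rule hands you $j$-dependent conditioning on $A_{<j}$, whereas the statement demands a single POVM $\{E_k\}$ on $A$ shared across all $j\in S$. The resolution in \cite{brandao2015generic} is the one you outline---measure a fixed block of output systems with an informationally complete POVM and pull that back through $\Lambda$ to the input---and the optimization of the block size against the dimensional blow-up of the informationally complete measurement is precisely what produces the cube root and the $|A|^6\log_2|A|/\delta^3$ scaling. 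The remaining ingredients (Pinsker to pass from small conditional mutual information to trace distance, and the $|A|$-lossy conversion from Choi trace distance to diamond distance) are standard, as you note.
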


We also need the continuity bound of mutual information. Let $\rho_{AB},\sigma_{AB}$ be bipartite states on system A, of dimension $|A|\ge 2$, and system B. Assume $\gamma:=\frac{1}{2}\|\rho_{AB}-\sigma_{AB}\|_1\le\frac{1}{2}$. Due to Fannes-Audenaert inequality \cite{fannes1973continuity,audenaert2007sharp} and the fact that quantum operation cannot increase trace distance between two states, it holds $|H(A)_{\rho_{AB}}-H(A)_{\sigma_{AB}}| \le \frac{1}{2}\|\rho_A-\sigma_A\|_1 \log_2 (|A|-1)+h_2(\frac{1}{2}\|\rho_A-\sigma_A\|_1)\le \gamma \log_2(|A|-1)+h_2(\gamma)$, where $h_2(x):=-x\log_2 x-(1-x)\log_2(1-x)$ is the binary entropy functon. Due to Alicki-Fannes inequality \cite{alicki2004continuity} (see also \cite{winter2016tight} for a tighter continuity bound for conditional entropy), it holds $|H(A|B)_{\rho}-H(A|B)_{\sigma}|\le 8\gamma\log_2 |A| +2h_2(2\gamma)$. Therefore,
\begin{equation}\begin{split}\label{cont-bound}
&|I(A:B)_{\rho}-I(A:B)_{\sigma}|\\
\le& |H(A)_{\rho}-H(A)_{\sigma}|+|H(A|B)_{\rho}-H(A|B)_{\sigma}|\\
\le& 8\gamma\log_2 |A| + \gamma \log_2(|A|-1) +2h_2(2\gamma)+h_2(\gamma).
\end{split}\end{equation}

We are now in the position to prove Theorem \ref{result}.

\begin{proof}
The desired statement is equivalent to that 
\begin{equation*}\begin{split}
&\ \max_{\cE_A,\cF_B\in\text{QC}}I(A:B)_{(\cE_A\ox\cF_B)\rho_{AB}} \\
=&\ \lim_{n\to\infty}\max_{\Lambda,\Gamma} \frac{1}{n}\sum_{j=1}^n I(A_j:B_j)_{(\Lambda_j\ox\Gamma_j)\rho_{AB}}  .
\end{split}\end{equation*}

Assume the POVMs that achieve $I_c(A:B):=\max_{\cE_A,\cF_B\in\text{QC}}I(A:B)_{(\cE_A\ox\cF_B)\rho_{AB}}$ are $\{M_i\}_i$ on $A$ and $\{N_i\}_i$ on $B$, then one can take $\Lambda(\cdot)=\sum_i\tr(M_i\cdot)\ketbra{i}{i}^{\ox n}$ and $\Gamma(\cdot)=\sum_i\tr(N_i\cdot)\ketbra{i}{i}^{\ox n}$. It follows that $I_c(A:B)\le\max_{\Lambda,\Gamma} \frac{1}{n}\sum_{j=1}^n I(A_j:B_j)_{(\Lambda_j\ox\Gamma_j)\rho_{AB}}$, and it remains to show $I_c(A:B)\ge\lim_{n\to\infty}\max_{\Lambda,\Gamma} \frac{1}{n}\sum_{j=1}^n I(A_j:B_j)_{(\Lambda_j\ox\Gamma_j)\rho_{AB}} $.

Similarly to Lemma \ref{brandao}, let $\Gamma:\cS(B)\to \cS(B_1\ox\cdots\ox B_n)$ be an arbitrary cptp map, then there exists a POVM $\{F_k\}_k$ and a set $S'\subset [n]$ with $|S'|\ge n(1-\delta)$ such that for all $j\in S'$,
\begin{equation}\label{222}
\|\Gamma_j-\cF_j\|_{\diamond} \le 3\left(\frac{\ln (2) |B|^6 \log_2 |B|}{n\delta^3}\right)^{1/3},
\end{equation}
with $\cF_j(\cdot):=\sum_k \tr(F_k\cdot) \sigma'_{j,k}$ for states $\sigma'_{j,k}\in\cS(B_j)$. Therefore for fixed $0<\delta<1$, there exists $S''\subset [n]$ with $|S''|\ge n(1-2\delta)$ such that Eqs. (\ref{111}) and (\ref{222}) hold simultaneously for all $j\in S''$. Thus
\begin{align*}
\|\Lambda_j\ox\Gamma_j-\cE_j\ox\cF_j\|_{\diamond} &\le   \|\Lambda_j-\cE_j\|_{\diamond} + \|\Gamma_j-\cF_j\|_{\diamond}\\
&\le 6\left(\frac{\ln (2) d^6 \log_2 d}{n\delta^3}\right)^{1/3} =:\varepsilon,
\end{align*}
where $d:=\max\{|A|,|B|\}$.

For any state $\rho_{AB}$, by definition of the diamond norm, we have
\begin{equation}
\|(\Lambda_j\ox\Gamma_j)\rho_{AB}-(\cE_j\ox\cF_j)\rho_{AB}\|_1\le\|\Lambda_j\ox\Gamma_j-\cE_j\ox\cF_j\|_{\diamond}\le\varepsilon.
\end{equation}

We now have
\begin{align}
&\ I(A_j:B_j)_{(\Lambda_j\ox\Gamma_j)\rho_{AB}} \nonumber \\
\le&\ I(A_j:B_j)_{(\cE_j\ox\cF_j)\rho_{AB}} + 4\varepsilon \log_2 |A_j| + \frac{\varepsilon}{2} \log_2(|A_j|-1)\nonumber\\
&+2h_2(\varepsilon)+h_2(\varepsilon /2) \label{app-of-cont-bound} \\
\le&\ I(A_j:B_j)_{(\widetilde{\cE}_j\ox\widetilde{\cF}_j)\rho_{AB}} + 4\varepsilon \log_2 |A_j| \nonumber \\
&+ \frac{\varepsilon}{2} \log_2(|A_j|-1) +2h_2(\varepsilon)+h_2(\varepsilon/2)  \label{eq-local} \\
\le&\ I_c(A:B) + 4\varepsilon \log_2 |A_j| + \frac{\varepsilon}{2} \log_2(|A_j|-1) +2h_2(\varepsilon)+h_2(\varepsilon /2) \nonumber \\
=: &\ K, \nonumber
\end{align}
where $\widetilde{\cE}_j(\cdot):=\sum_k \tr(E_k\cdot) \ketbra{k_j}{k_j}$ and $\widetilde{\cF}_j(\cdot):=\sum_k \tr(F_k\cdot) \ketbra{k'_j}{k'_j}$, and $\{\ket{k_j}\}_k$ and $\{\ket{k'_j}\}_{k'}$ are orthonormal basis of system $A_j$ and $B_j$ respectively, Eq. (\ref{app-of-cont-bound}) follows from the continuity bound Eq. (\ref{cont-bound}), and Eq. (\ref{eq-local}) follows from the fact that local operations cannot increase mutual information.

Set $\delta=n^{-1/6}$, then as $n\to\infty$ one has $\delta,\varepsilon\to 0$ and $K\to I_c(A:B)$. It follows that
\begin{align*}
&\ \frac{1}{n}\sum_{j=1}^n I(A_j:B_j)_{(\Lambda_j\ox\Gamma_j)\rho_{AB}} \\
\le&\ \frac{1}{n}\left( (1-2\delta)n\cdot K +2\delta n\cdot 2\log_2 d' \right) \\
\to&\ K\to I_c(A:B) \text{ as } n\to\infty,
\end{align*}
where $d':=\max\{|A_j|,|B_j|\}_j$. That is,
\begin{align*}
&\ \lim_{n\to\infty}\max_{\Lambda,\Gamma} \frac{1}{n}\sum_{j=1}^n I(A_j:B_j)_{(\Lambda_j\ox\Gamma_j)\rho_{AB}} \\
\le&\ I_c(A:B),
\end{align*}
and we are done.
\end{proof}

\section{Optimal universal and state-dependent broadcasting of correlations}

We now turn to the non-asymptotic regime of the local broadcasting of quantum correlations. We first study the optimal universal unilocal broadcasting and then the optimal state-dependent unilocal broadcasting.

\subsection{Optimal universal unilocal broadcasting}
We first give a general definition for the {\em unilocal $n$-broadcasting fidelity} of a bipartite state.
\begin{definition}\label{def-broadcast-fidelity-1}
Given a bipartite state $\rho_{AB}$, the optimal unilocal $n$-broadcasting fidelity of $\rho_{AB}$ on system $A$ (see Fig. \ref{fig:broadcast}) is defined as the following optimal fidelity
\begin{equation}\begin{split}\label{eq-def-broadcast-fidelity-1}
f_n(\rho_{AB}^{})=\sup \Bigl\{ & \frac{1}{n}\sum_{j=1}^n F(\rho_{AB}^{},\tr_{\backslash A_j B}  \Lambda_{A\to A_1...A_n} (\rho_{AB}^{})): \\
& \Lambda_{A\to A_1\cdots A_n}\text{ is a quantum channel} \bigl\}.
\end{split}\end{equation}
\end{definition} 

Since the set of quantum channels is compact and the fidelity function is continuous \cite{watrous2011theory}, the supremum in Eq. (\ref{eq-def-broadcast-fidelity-1}) is attained. Define a unitary operator $W_\pi$ on systems $A_1\cdots A_n$ for each permutation $\pi\in S_n$, by the action
\[ W_\pi\ket{j_1,j_2,\dots,j_n}=\ket{j_{\pi^{-1}(1)},j_{\pi^{-1}(2)},\dots,j_{\pi^{-1}(n)}} \]
for any choice of $\ket{j_1},\ket{j_2},\dots,\ket{j_n}$. A quantum channel $\Lambda_{A\to A_1\cdots A_n}$ is called a {\em symmetric broadcasting channel}, if
\[ \Lambda(\rho)=W_\pi (\Lambda(\rho))W_\pi^\dag \]
for any $\rho\in \cS(A)$ and $\pi\in S_n$.

We notice that for any channel $\Lambda_{A\to A_1\cdots A_n}$ and $\pi\in S_n$, $\Lambda(\cdot)$ and $W_\pi(\Lambda(\cdot))W_\pi^\dag$ give the same average fidelity in Eq. (\ref{eq-def-broadcast-fidelity-1}), since
\begin{align*}
& \tr_{\backslash A_j B}  \Lambda_{A\to A_1...A_n} (\rho_{AB}^{}) \\
=& \tr_{\backslash A_{\pi^{-1}(j)} B}  W_\pi (\Lambda_{A\to A_1...A_n} (\rho_{AB}^{}))  W_\pi^\dag .
\end{align*}
Thus $\frac{1}{n!}\sum_{\pi\in S_n}  W_\pi (\Lambda(\cdot)) W_\pi^\dag$, which is a symmetric broadcasting channel, also gives the same value. So we only need to consider the supremum over symmetric broadcasting channels. In Eq. (\ref{eq-def-broadcast-fidelity-1}), when $\Lambda$  is a symmetric broadcasting channel, the summands are all the same.

Therefore, the optimal unilocal $n$-broadcasting fidelity of a bipartite state $\rho_{AB}$ on $A$ can be rewritten as
\begin{equation}\begin{split}\label{eq-def-broadcast-fidelity-2}
f_n(\rho_{AB}^{})=& \max \{  F(\rho_{AB}^{},\tr_{\backslash A_1 B}  \Lambda_{A\to A_1...A_n} (\rho_{AB}^{})): \\
& \Lambda_{A\to A_1\cdots A_n}\text{ is a symmetric broadcasting channel}\}.
\end{split}\end{equation}

It is verified that $\Lambda_{A\to A_1\cdots A_n}$ is a symmetric broadcasting channel iff its Choi matrix $J_{\Lambda}$ satisfies $J_\Lambda=W_\pi J_\Lambda W_\pi^\dag$ for any $\pi$, i.e., $J_\Lambda=\frac{1}{n!}\sum_{\pi\in S_n} W_\pi J_\Lambda W_\pi^\dag$. Using this symmetry, we give the SDP characterization for optimal unilocal broadcasting fidelity as follows.

\begin{theorem}\label{theorem2}
The optimal unilocal $n$-broadcasting fidelity of $\rho_{AB}$ on $A$ is given by the optimal solution of the following SDP,
\begin{align}\begin{split}\label{sdp-of-fn}
f_n(\rho_{AB})=\max&\  \frac{1}{2}\tr (X_{AB}+X_{AB}^\dagger) \\
\mathrm{s.t.}&\  \begin{pmatrix} \rho_{AB} & X_{AB} \\ X_{AB}^{\dag} & \tr_{\backslash A_1 B}(J^{T_A}\rho_{AB}) \end{pmatrix}\ge0, \\
&\  J_{AA_1\cdots A_n}\ge 0,\tr_{\backslash A}J_{AA_1\cdots A_n}=\1_A,\\
&\ J_{AA_1\cdots A_n}=\frac{1}{n!}\sum_{\pi\in S_n} W_\pi J_{AA_1\cdots A_n} W_\pi^\dag,
\end{split}\end{align}
where $W_\pi$ acts on $A_1\cdots A_n$.
\end{theorem}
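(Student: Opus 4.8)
The plan is to convert the maximization over symmetric broadcasting channels in Eq.~(\ref{eq-def-broadcast-fidelity-2}) into an SDP by (i) replacing the channel $\Lambda_{A\to A_1\cdots A_n}$ by its Choi matrix $J_{AA_1\cdots A_n}$ and encoding the channel constraints, and (ii) replacing the fidelity $F(\rho_{AB},\sigma_{AB})$ by its standard SDP formulation. For step (i), recall that CPTP channels correspond exactly to $J\ge 0$ with $\tr_{\backslash A}J=\1_A$, and, as noted just before the statement, the symmetry $\Lambda(\rho)=W_\pi\Lambda(\rho)W_\pi^\dag$ for all $\pi$ is equivalent to $J=\frac{1}{n!}\sum_{\pi\in S_n}W_\pi J W_\pi^\dag$; these are precisely the last two lines of the SDP. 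The output state appearing in the fidelity is $\tr_{\backslash A_1 B}\Lambda(\rho_{AB})$, and using $\cE_{A\to B}(\rho_A)=\tr_A(J_\cE^{T_A}\rho_A)$ from the preliminaries this equals $\tr_{\backslash A_1 B}(J^{T_A}\rho_{AB})$, which is linear in $J$ — this is the $(2,2)$ block of the matrix constraint.

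For step (ii), I would invoke the well-known semidefinite characterization of fidelity (cited in the preliminaries via \cite{killoran2012entanglement,watrous2012simpler}): for states $\rho,\sigma$,
\[
F(\rho,\sigma)=\max\Bigl\{\tfrac12\tr(X+X^\dag):\begin{pmatrix}\rho & X\\ X^\dag & \sigma\end{pmatrix}\ge 0\Bigr\}.
\]
Applying this with $\sigma=\tr_{\backslash A_1 B}(J^{T_A}\rho_{AB})$ and introducing the variable $X_{AB}$ gives exactly the objective $\frac12\tr(X_{AB}+X_{AB}^\dag)$ together with the block-positivity constraint in Eq.~(\ref{sdp-of-fn}). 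Combining the two steps, the joint maximization over $(X_{AB},J_{AA_1\cdots A_n})$ subject to all the listed constraints equals $\max_\Lambda F(\rho_{AB},\tr_{\backslash A_1 B}\Lambda(\rho_{AB}))$ over symmetric broadcasting channels, which by Eq.~(\ref{eq-def-broadcast-fidelity-2}) is $f_n(\rho_{AB})$.

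The two directions of the equality need a brief justification. For ``$\ge$'': given any symmetric broadcasting channel $\Lambda$ achieving fidelity $f_n(\rho_{AB})$, its Choi matrix is feasible for the $J$-constraints, and an optimal $X_{AB}$ from the fidelity SDP for that fixed $\sigma$ makes the block constraint feasible with objective value $f_n(\rho_{AB})$. For ``$\le$'': any feasible $(X_{AB},J)$ yields via the symmetry and trace constraints a symmetric broadcasting channel $\Lambda$ with Choi matrix $J$, and the block-positivity constraint forces $\frac12\tr(X_{AB}+X_{AB}^\dag)\le F(\rho_{AB},\tr_{\backslash A_1 B}(J^{T_A}\rho_{AB}))\le f_n(\rho_{AB})$.

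The main obstacle, such as it is, is bookkeeping rather than conceptual: one must make sure the partial-transpose/partial-trace identity $\tr_{\backslash A_1 B}\Lambda(\rho_{AB})=\tr_{\backslash A_1 B}(J^{T_A}\rho_{AB})$ is applied with the correct systems traced out, and that the symmetrization constraint on $J$ is genuinely equivalent to $\Lambda$ being a symmetric broadcasting channel (this was asserted before the statement and can be verified by noting $W_\pi$ acts only on the $A_1\cdots A_n$ factors and commutes with the reference system of the Choi state). I expect these verifications to be routine, so the proof is essentially the assembly of the two SDP reformulations above.
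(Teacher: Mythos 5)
Your proposal is correct and follows essentially the same route as the paper's own proof: pass to the Choi matrix with the CPTP and symmetrization constraints, use $\tr_{\backslash A_1B}\Lambda(\rho_{AB})=\tr_{\backslash A_1B}(J^{T_A}\rho_{AB})$ by linearity, and substitute the SDP characterization of fidelity to obtain the combined program. No gaps; the two directions of the equality you spell out are exactly the implicit content of the paper's ``combining'' step.
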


\begin{proof}
It suffices to consider the symmetric broadcasting channels only. Let $J_{AA_1\cdots A_n}$ be the Choi matrix of $\Lambda_{A\to A_1\cdots A_n}$,  then for any $\rho_A$,
\begin{align*}
\Lambda_{A\to A_1\cdots A_n}(\rho_{A}) =\tr_A(J_{AA_1\cdots A_n}^{T_A}\rho_{A}).
\end{align*}

By linearity, for any $\rho_{AB}$,
\begin{align*}
(\Lambda_{A\to A_1\cdots A_n}\ox \1_B)\rho_{AB} =\tr_A(J_{AA_1\cdots A_n}^{T_A}\rho_{AB}),
\end{align*}
and
\begin{equation*}
\tr_{\backslash A_jB}(\Lambda\ox \1_B)\rho_{AB}=\tr_{\backslash A_jB}(J_{AA_1\cdots A_n}^{T_A}\rho_{AB}).
\end{equation*}


Now we can rewrite the optimization problem in Eq. (\ref{eq-def-broadcast-fidelity-2}) in terms of the Choi matrix of $\Lambda$ as
\begin{align}\begin{split}\label{pf-thm2}
f_n(\rho_{AB})=\max&\  F(\rho_{AB},\widehat\rho_{AB}) \\
\mathrm{s.t.}&\  \widehat\rho_{AB}=\tr_{\backslash A_1 B}(J_{AA_1\cdots A_n}^{T_A}\rho_{AB}),\\
&\  J_{AA_1\cdots A_n}\ge 0,\tr_{\backslash A}J_{AA_1\cdots A_n}=\1_A,\\
&\ J_{AA_1\cdots A_n}=\frac{1}{n!}\sum_{\pi\in S_n} W_\pi J_{AA_1\cdots A_n} W_\pi^\dag.
\end{split}\end{align}

The fidelity function $F(\rho,\sigma)$ of two states $\rho,\sigma$ is given by the optimal solution of the following SDP \cite{killoran2012entanglement,watrous2012simpler},
\begin{align}\begin{split}\label{sdp-state-fidelity}
F(\rho,\sigma)=\max&\  \frac{1}{2}\tr(X+X^\dag) \\
\mathrm{s.t.}&\  \begin{pmatrix} \rho  & X \\ X^{\dag} & \sigma \end{pmatrix}\ge 0.
\end{split}\end{align}

Combining Eqs. (\ref{pf-thm2}) and (\ref {sdp-state-fidelity}) gives the desired SDP (\ref{sdp-of-fn}).
\end{proof}

\begin{remark}
The only difference between the SDP (\ref{sdp-of-fn}) and that in Ref. \cite{piani2016hierarchy} lies in the symmetry of the broadcasting channel, that is, $J=W_\pi J W_\pi^\dag$ for any $\pi\in S_n$ is required in our SDP. In Ref. \cite{piani2016hierarchy}, it is required that $J=W_{\pi_1} J W_{\pi_2}^\dag$ for any $\pi_1,\pi_2\in S_n$ which makes sure that the output state lies in the symmetric subspace. These two SDPs are different generalization of perfect unilocal broadcasting. But the SDP (\ref{sdp-of-fn}) here has a more direct derivation, and it is clear that the optimal solution to SDP (\ref{sdp-of-fn}) is no less than that to the SDP in \cite{piani2016hierarchy}. Numerical experiments show that the two SDPs give the same optimal solution for some cases of $\rho_{AB}$, but we do not know how to give a rigorous proof or disproof for general case up to now.
\end{remark}

In the SDP (\ref{sdp-of-fn}), if $(J_{AA_1\cdots A_n},X_{AB})$ is feasible solution of $f(\rho_{AB})$, then $((\1_A\ox U^{\ox n})J_{AA_1\cdots A_n}(\1_A\ox U^{\ox n})^\dag,(U_A\ox V_B)X_{AB}(U_A\ox V_B)^\dag)$ is feasible solution of $f((U_A\ox V_B)\rho_{AB}(U_A\ox V_B)^\dag)$ for any local local unitaries $U_A$ and $V_B$. In other words, the unilocal broadcasting fidelity $f_n$ is invariant under local unitaries.

We now consider the unilocal broadcasting fidelity of a pure state $\psi_{AB}:=\ketbra{\psi}{\psi}_{AB}$, and especially the maximally entangled state, under the action of the symmetric broadcasting channel. The optimal unilocal broadcasting fidelity $f_n$ of a pure state $\psi_{AB}$ can be written as

\begin{align}\begin{split}\label{pure-local-sdp}
f_n(\psi_{AB})=\max&\  \sqrt{\tr(\widehat\rho_{AB}\psi_{AB})}  \\
\mathrm{s.t.}&\  \widehat\rho_{AB}=\tr_{\backslash A_1 B}(J_{AA_1\cdots A_n}^{T_A}\psi_{AB}),\\
&\  J_{AA_1\cdots A_n}\ge 0,\tr_{\backslash A}J_{AA_1\cdots A_n}=\1_A,\\
&\ J_{AA_1\cdots A_n}=\frac{1}{n!}\sum_{\pi\in S_n} W_\pi J_{AA_1\cdots A_n} W_\pi^\dag,
\end{split}\end{align}
where $W_\pi$ acts on $A_1\cdots A_n$.

The corresponding dual SDP is
\begin{align}\begin{split}\label{pure-local-dual}
f_n(\psi_{AB})=\min&\  \sqrt{\tr Y_A} \\
\mathrm{s.t.}&\  Y_A,Z_{AA_1\cdots A_n} \text{ Hermitian}, \\
&\ \tr_B\left(\psi_{AB}^{T_A}\psi_{A_1B}\right)-Y_A \\
&\ +Z-\frac{1}{n!}\sum_{\pi\in S_n} W_\pi^\dag Z W_\pi\le 0,
\end{split}\end{align}
where, again, $W_\pi$ acts on $A_1\cdots A_n$.

It is verified that the strong duality holds by Slater's theorem since $J_{AA_1\cdots A_n}=\1/|A|^n$ is in the relative interior of the feasible region of SDP (\ref{pure-local-sdp}). That means the optimal solutions to SDPs (\ref{pure-local-sdp}) and (\ref{pure-local-dual}) concide.

\begin{proposition}
The optimal unilocal 2-broadcasting fidelity of the maximally entangled state $\Phi_d:=\ketbra{\Phi_d}{\Phi_d}$ with $\ket{\Phi_d}=\frac{1}{\sqrt{d}}\sum_{j=0}^{d-1}\ket{jj}$ on systems $AB$ is given by
\[ f_2(\Phi_d)=\sqrt{\frac{d+1}{2d}}. \]
\end{proposition}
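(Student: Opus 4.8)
The plan is to specialize the SDP (\ref{pure-local-sdp}) to $n=2$ and $\psi_{AB}=\Phi_d$. For a symmetric broadcasting channel $\Lambda_{A\to A_1A_2}$ set $\Omega_{A_1A_2B}:=(\Lambda\ox\1_B)\Phi_d$; then the primal variable is $\widehat\rho_{A_1B}=\tr_{A_2}\Omega$, and since $\Phi_d$ is pure the objective equals $\sqrt{\bra{\Phi_d}\widehat\rho_{A_1B}\ket{\Phi_d}}$ (with $A_1$ identified with $A$). Hence it suffices to prove $g:=\max_{\Lambda}\bra{\Phi_d}\widehat\rho_{A_1B}\ket{\Phi_d}=\frac{d+1}{2d}$, the maximum being over symmetric broadcasting channels; then $f_2(\Phi_d)=\sqrt{g}=\sqrt{(d+1)/(2d)}$.

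For the lower bound $g\ge\frac{d+1}{2d}$ I would plug in the $1\to2$ universal cloner $\Lambda(\rho)=\frac{2}{d+1}\,P_{\text{sym}}\,(\rho\ox\1_{A_2})\,P_{\text{sym}}$, where $P_{\text{sym}}=\frac12(\1+W)$ projects onto the symmetric subspace of $A_1A_2$ and $W$ is the swap. Expanding $P_{\text{sym}}$ in terms of $W$ shows at once that $\Lambda$ is CPTP and permutation invariant — hence a symmetric broadcasting channel — and that its reduced channel $\Lambda_1:=\tr_{A_2}\circ\Lambda$ is the depolarizing channel $\Lambda_1(\rho)=p\,\rho+(1-p)(\tr\rho)\frac{\1}{d}$ with $p=\frac{d+2}{2(d+1)}$. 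Then $(\Lambda_1\ox\1_B)\Phi_d=p\,\Phi_d+(1-p)\frac{\1}{d^2}$, so $\bra{\Phi_d}(\Lambda_1\ox\1_B)\Phi_d\ket{\Phi_d}=p+\frac{1-p}{d^2}=\frac{d+1}{2d}$ (and incidentally this exhibits the universal cloner as optimal here).

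For the matching upper bound I would use that $\Omega$ is invariant under the swap $W$ on $A_1A_2$. Let $P$ be the orthogonal projector onto the subspace $\ket{\Phi_d}_{A_1B}\ox\cH_{A_2}$ and $Q$ the orthogonal projector onto $\cH_{A_1}\ox\ket{\Phi_d}_{A_2B}$. Then $\bra{\Phi_d}\widehat\rho_{A_1B}\ket{\Phi_d}=\tr[P\Omega]$, and the swap symmetry of $\Omega$ gives $\tr[P\Omega]=\tr[Q\Omega]$, so the quantity equals $\frac12\tr[(P+Q)\Omega]\le\frac12\lambda_{\max}(P+Q)$ because $\Omega$ is a state. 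It then remains to compute $\lambda_{\max}(P+Q)$: $P$ and $Q$ are orthogonal projectors of rank $d$, and a short computation shows $\|Pv\|=1/d$ for every unit vector $v\in\mathrm{range}(Q)=\cH_{A_1}\ox\ket{\Phi_d}_{A_2B}$, whence $\|PQ\|_\infty=1/d$; since $\|PQ\|_\infty<1$, the standard decomposition of a pair of orthogonal projectors into one- and two-dimensional invariant subspaces gives $\lambda_{\max}(P+Q)=1+\|PQ\|_\infty=1+\frac1d$. Therefore $g\le\frac{d+1}{2d}$, and combined with the lower bound $g=\frac{d+1}{2d}$, i.e.\ $f_2(\Phi_d)=\sqrt{(d+1)/(2d)}$.

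The only step carrying real content is the eigenvalue identity $\lambda_{\max}(P+Q)=1+1/d$; everything else is bookkeeping. An alternative route to the upper bound is through the dual SDP (\ref{pure-local-dual}): one takes $Y_A=\frac{d+1}{2d^2}\1_A$, so that $\sqrt{\tr Y_A}=\sqrt{(d+1)/(2d)}$, and seeks a Hermitian $Z_{AA_1A_2}$ with $\tr_B(\Phi_d^{T_A}\,\ketbra{\Phi_d}{\Phi_d}_{A_1B})-Y_A+\frac12(Z-WZW)\le0$ (all as operators on $AA_1A_2$), using $\tr_B(\Phi_d^{T_A}\,\ketbra{\Phi_d}{\Phi_d}_{A_1B})=\frac1d\,\ketbra{\Phi_d}{\Phi_d}_{AA_1}$; producing such a $Z$ is then the crux, and together with the universal cloner (a feasible point of (\ref{pure-local-sdp})) it pins down $f_2(\Phi_d)$ by strong duality.
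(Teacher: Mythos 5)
Your proof is correct, but the upper bound is obtained by a genuinely different route from the paper's. The paper proves optimality by exhibiting an explicit dual feasible pair for the SDP (\ref{pure-local-dual}), namely $Y_A=\frac{d+1}{2d^2}\1_d$ and $Z_{AA_1A_2}=-\frac{d+1}{d^3}(d\Phi_d-I_0)\ox\1_d$ with $I_0=\sum_i\ketbra{ii}{ii}$ --- exactly the certificate you defer to your ``alternative route'' without constructing. You instead bypass duality entirely: using the swap invariance of $\Omega=(\Lambda\ox\1_B)\Phi_d$ to write the objective as $\frac12\tr[(P+Q)\Omega]$ and then invoking Jordan's two-projector lemma, $\lambda_{\max}(P+Q)=1+\|PQ\|_\infty$ (valid here since $\|PQ\|_\infty=1/d<1$, so the ranges intersect trivially), which gives the bound $\frac12(1+\frac1d)=\frac{d+1}{2d}$. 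Your computation $\|Pv\|=1/d$ for unit $v\in\mathrm{range}(Q)$ checks out, and the argument correctly uses only the symmetry reduction already established in the paper. Your lower bound is essentially the paper's: both plug in the UQCM, though you present it as $\Lambda(\rho)=\frac{2}{d+1}P_{\mathrm{sym}}(\rho\ox\1)P_{\mathrm{sym}}$ and compute via the reduced depolarizing channel, while the paper writes its Choi matrix $\sum_i\proj{v_i}$; these are the same channel. What each approach buys: the paper's dual certificate is a direct verification tailored to the SDP formalism it develops, whereas your spectral argument is self-contained, explains \emph{why} the value is $\frac{d+1}{2d}$ (it is half the maximal overlap of two symmetric ``copy'' subspaces at principal angle $\arccos(1/d)$), and requires no guesswork in finding $Z$.
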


\begin{proof}
    We prove this proposition by explicitly constructing feasible solutions in primal and dual problem both of which can achieve the value of $\sqrt{\frac{d+1}{2d}}$.

    In the primal problem, we take
    \begin{equation}
    \label{MES optimal channel}
    J_{AA_1A_2} = \sum_{i=0}^{d-1} \ket{v_i}\bra{v_i},
     \end{equation}
    where \[\ket{v_i} = \frac{1}{\sqrt{2(d+1)}}(2\ket{i}\otimes \ket{ii} + \sum_{j\neq i} \ket{j} \otimes (\ket{ij} + \ket{ji}))\]
    This operation is also known as the universal quantum copying machine (UQCM) \cite{buvzek1996quantum,buvzek1998universal}.
    
    In the dual problem, we take \[Y_A = \frac{d+1}{2d^2}\1_d,\ Z_{AA_1A_2} = -\frac{d+1}{d^3}(d\Phi_d - I_0)\otimes \1_d,\]
    where $I_0 = \sum_{i=0}^{d-1} \ket{ii}\bra{ii}$.
\end{proof}

\begin{remark}
It is interesting that the optimal unilocal 2-broadcasting channel of the maximally entangled state is the same as the UQCM which comes from the global broadcasting setting. 
There is much progress on quantum cloning machine that has been made in the past years (see, e.g., \cite{scarani2005quantum,fan2014quantum}). For $d\otimes d$ bipartite maximally entangled state, its optimal unilocal 2-broadcasting channel is denoted as $\Upsilon_{A\rightarrow A_1A_2}^d$ with Choi matrix (\ref{MES optimal channel}) and \[\tr_{A_2}\Upsilon_{A\rightarrow A_1A_2}^d(\rho_A) = \frac{d+2}{2d+2} \rho_A + \frac{1}{2d+2}\1_d,\] is a depolarizing channel. 
     \end{remark}

Next, we will introduce a worst-case quantifier for the performance of unilocal broadcasting of a symmetric channel.

\begin{definition}
For any symmetric broadcasting channel $\Lambda_{A\to A_1\cdots A_n}$, we define the unilocal broadcasting power $\cP(\Lambda)$ of $\Lambda$ as
\begin{equation}
\cP(\Lambda):=\inf_{\rho_{AB}\in\cS(AB)}F(\rho_{AB},\tr_{\backslash A_1B}\Lambda(\rho_{AB}))
\end{equation}

\end{definition}

The unilocal broadcasting power of a symmetric broadcasting channel gives a measure of the universal unilocal broadcasting ability for symmetric broadcasting channels. The universality means it is independent of the input state. The channel with a larger value of unilocal broadcasting power is more capable of unilocal broadcasting quantum states in a universal sense.

Based on the result of optimal unilocal 2-broadcasting fidelity of maximally entangled state, we will prove that the optimal unilocal 2-broadcasting channel $\Upsilon_{A\to A_1A_2}^d$ for the maximally entangled state has the greatest power for unilocal 2-broadcasting.

\begin{lemma}\label{pure universal}
For any $d\ox d$ pure state $\ket \psi$,
\begin{equation}\label{lem-pure-univ}
f_2(\proj{\psi})\ge F(\proj{\psi},\tr_{A_2}\Upsilon_{A\to A_1A_2}^d(\proj{\psi}))\ge \sqrt{\frac{d+1}{2d}}.
\end{equation}
\end{lemma}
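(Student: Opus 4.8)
The plan is to treat the two inequalities separately: the first is essentially definitional, and the second reduces to a one-line fidelity computation followed by an elementary purity bound.

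For the first inequality I would observe that $\Upsilon_{A\to A_1A_2}^d$ with Choi matrix (\ref{MES optimal channel}) is a legitimate \emph{symmetric} broadcasting channel. It is completely positive, it is trace preserving (checked in the proof of the preceding Proposition), and each vector $\ket{v_i}$ is manifestly invariant under exchanging $A_1$ and $A_2$, so $J=W_{(12)}JW_{(12)}^\dagger$. Hence $\Upsilon^d$ is admissible in the optimization (\ref{eq-def-broadcast-fidelity-2}), which for $n=2$ reads $f_2(\rho_{AB})=\max_\Lambda F(\rho_{AB},\tr_{A_2}\Lambda(\rho_{AB}))$ over symmetric broadcasting channels $\Lambda$; specializing to $\Lambda=\Upsilon^d$ and $\rho_{AB}=\proj{\psi}$ gives $f_2(\proj{\psi})\ge F(\proj{\psi},\tr_{A_2}\Upsilon^d(\proj{\psi}))$.

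For the second inequality I would use the depolarizing form of $\tr_{A_2}\circ\Upsilon^d$ recorded in the Remark above, $\tr_{A_2}\Upsilon^d(\rho_A)=\tfrac{d+2}{2d+2}\rho_A+\tfrac{1}{2d+2}\1_d$ (alternatively derivable by tracing $A_2$ out of the Choi matrix (\ref{MES optimal channel}) and contracting with the input). Writing $\psi_{AB}:=\proj{\psi}$ and $\psi_B:=\tr_A\psi_{AB}$, and applying the channel on system $A$,
\begin{equation*}
\tr_{A_2}\Upsilon^d(\psi_{AB})=\tfrac{d+2}{2d+2}\,\psi_{AB}+\tfrac{1}{2d+2}\,\1_A\ox\psi_B .
\end{equation*}
Since $F(\psi_{AB},\sigma)=\sqrt{\bra{\psi}\sigma\ket{\psi}}$ for a pure left argument and $\bra{\psi}(\1_A\ox\psi_B)\ket{\psi}=\tr(\psi_B^2)$, this yields
\begin{equation*}
F\big(\psi_{AB},\tr_{A_2}\Upsilon^d(\psi_{AB})\big)^2=\frac{d+2+\tr(\psi_B^2)}{2d+2}.
\end{equation*}
Finally I would invoke $\tr(\psi_B^2)\ge 1/d$, valid for any state on a $d$-dimensional space, with equality iff $\psi_B=\1_d/d$, i.e.\ iff $\ket{\psi}$ is maximally entangled; substituting gives $\tfrac{d+2+1/d}{2d+2}=\tfrac{(d+1)^2}{2d(d+1)}=\tfrac{d+1}{2d}$, and taking square roots completes the argument.

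I do not expect a genuine obstacle. The two points to get right are (i) verifying that $\Upsilon^d$ really is a symmetric broadcasting channel, so that it is admissible in the definition of $f_2$, and (ii) having the explicit depolarizing form of $\tr_{A_2}\Upsilon^d$; if one prefers not to cite the Remark, the only routine calculation needed is $\tr_{A_2}$ of the Choi matrix (\ref{MES optimal channel}) followed by the contraction $\tr_{AA_2}(J^{T_A}\psi_{AB})$. Everything else is an elementary inequality on purities, and the computation makes transparent that the worst case over pure inputs is exactly the maximally entangled state, which is consistent with the preceding Proposition.
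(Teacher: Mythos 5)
Your proposal is correct and follows essentially the same route as the paper: compute $\tr_{A_2}\Upsilon^d(\proj{\psi})$ via the depolarizing form of $\tr_{A_2}\circ\Upsilon^d$, use $F(\psi,\sigma)=\sqrt{\bra{\psi}\sigma\ket{\psi}}$, and bound the resulting term from below — your purity bound $\tr(\psi_B^2)\ge 1/d$ is exactly the paper's Cauchy--Schwarz step $\sum_i\lambda_i^4\ge(\sum_i\lambda_i^2)^2/d$ in different notation. The only (welcome) addition is that you explicitly verify $\Upsilon^d$ is an admissible symmetric broadcasting channel for the first inequality, which the paper leaves implicit.
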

\begin{proof}
Consider the Schmidt decomposation $\ket\psi=\sum_i\lambda_i\ket{i}_A\ket{i}_B$, where $\{\ket i_A\}_i$ and  $\{\ket i_B\}_i$ are some orthonormal bases. Thus,

\begin{equation}\begin{split}
\rho_{out}&=\tr_{A_2}\Upsilon_{A\to A_1A_2}^d(\proj{\psi}) \\
&= \sum_{ij}\lambda_i\lambda_j\ketbra{i}{j}\ox( \frac{d+2}{2d+2}\ketbra{i}{j}+ \frac{1}{2d+2}\1_d)\\
&= \frac{d+2}{2d+2}\proj{\psi}+\sum_{ij}\frac{\lambda_i\lambda_j}{2d+2}\ketbra{i}{j}\ox\1_d.
\end{split}
\end{equation}
Then the second inequality in Eq. (\ref{lem-pure-univ}) follows from
 \begin{equation}\begin{split}
&F^2(\proj{\psi},\rho_{out})\\
=&F^2 \left(\proj{\psi}, \frac{d+2}{2d+2}\proj{\psi}+\sum_{ij}\frac{\lambda_i\lambda_j}{2d+2}\ketbra{i}{j}\ox\1_d \right)\\
=& \frac{d+2}{2d+2}+\sum_{ij}\frac{\lambda_i\lambda_j}{2d+2}\bra{\psi}(\ketbra{i}{j}\ox\1_d)\ket{\psi}\\
=& \frac{d+2}{2d+2}+\frac{\sum_id\lambda_i^4}{(2d+2)d} \\
\ge& \frac{d+2}{2d+2}+\frac{(\sum_i\lambda_i^2)^2}{(2d+2)d} = \frac{d+1}{2d}.
\end{split}
\end{equation}
\end{proof}

\begin{proposition}\label{mixed state universal}
For any $d\ox d$ mixed state $\rho$, 
$$f_2(\rho)\ge F(\rho,\tr_{A_2}\Upsilon_{A\to {A_1A_2}}^d(\rho))\ge \sqrt{\frac{d+1}{2d}}.$$
\end{proposition}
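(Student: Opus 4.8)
The plan is to reduce the mixed-state bound to the pure-state bound already established in Lemma \ref{pure universal}, using the fact that fidelity is jointly concave and that every mixed state is a convex mixture of pure states. Concretely, write a spectral (or any) decomposition $\rho=\sum_k p_k \proj{\psi_k}$ with each $\ket{\psi_k}$ a $d\ox d$ pure state and $\{p_k\}$ a probability distribution. Since $\Upsilon^d_{A\to A_1A_2}$ and the partial trace $\tr_{A_2}$ are linear, we have $\tr_{A_2}\Upsilon^d_{A\to A_1A_2}(\rho)=\sum_k p_k\,\tr_{A_2}\Upsilon^d_{A\to A_1A_2}(\proj{\psi_k})$.

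The key step is then to invoke joint concavity of the root fidelity: $F\bigl(\sum_k p_k \rho_k,\sum_k p_k\sigma_k\bigr)\ge \sum_k p_k F(\rho_k,\sigma_k)$ for any ensembles $\{\rho_k\},\{\sigma_k\}$ with the same weights. Applying this with $\rho_k=\proj{\psi_k}$ and $\sigma_k=\tr_{A_2}\Upsilon^d_{A\to A_1A_2}(\proj{\psi_k})$ gives
\begin{equation*}
F\bigl(\rho,\tr_{A_2}\Upsilon^d_{A\to A_1A_2}(\rho)\bigr)\ge \sum_k p_k\, F\bigl(\proj{\psi_k},\tr_{A_2}\Upsilon^d_{A\to A_1A_2}(\proj{\psi_k})\bigr).
\end{equation*}
By Lemma \ref{pure universal} each term on the right is at least $\sqrt{(d+1)/(2d)}$, so the whole sum is at least $\sqrt{(d+1)/(2d)}$ because the $p_k$ sum to one. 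Finally, $\Upsilon^d$ is a symmetric broadcasting channel (its Choi matrix \eqref{MES optimal channel} is supported on the symmetric subspace of $A_1A_2$, hence invariant under $W_\pi$), so it is feasible in the optimization defining $f_2(\rho)$; therefore $f_2(\rho)\ge F(\rho,\tr_{A_2}\Upsilon^d_{A\to A_1A_2}(\rho))$, which chains with the previous inequality to give the claim.

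The only thing that needs a little care — and the main obstacle, such as it is — is the appeal to joint concavity of $F$: one must use the version for the root fidelity $F=\tr\sqrt{\sqrt{\rho}\sigma\sqrt{\rho}}$ (not $F^2$), which is the standard statement (e.g. via Uhlmann's theorem, or from the SDP \eqref{sdp-state-fidelity} since the feasible set is convex and the objective linear). Everything else is bookkeeping: linearity of the channel and partial trace, and the observation that $\Upsilon^d$ lies in the feasible set of the $f_2$ SDP. One may optionally remark that the first inequality is tight exactly when $\Upsilon^d$ happens to be the optimal symmetric $2$-broadcasting channel for $\rho$, which holds for the maximally entangled state by the Proposition above but need not hold in general.
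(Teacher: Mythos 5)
Your proposal is correct and follows essentially the same route as the paper's own proof: decompose $\rho$ into pure states, use linearity of $\tr_{A_2}\circ\Upsilon^d$, apply joint concavity of the fidelity, and invoke Lemma \ref{pure universal} termwise. Your additional remarks (that the concavity must be for the root fidelity and that $\Upsilon^d$ is feasible in the $f_2$ optimization) are correct points the paper leaves implicit, but they do not change the argument.
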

\begin{proof}
Suppose $\rho =\sum_j p_j\proj{\psi_j}$ is a pure state decomposition of $\rho$
 and $\hat\rho_j=\tr_{A_2}\Upsilon_{A\to {A_1A_2}}^d(\proj{\psi_j})$, then we have
\begin{equation}\begin{split}
 \tr_{A_2}\Upsilon_{A\to A_1A_2}^d(\rho)=&\tr_{A_2}\Upsilon_{A\to A_1A_2}^d(\sum_j p_j\proj{\psi_j})\\
 =&\sum_j p_j\tr_{A_2}\Upsilon_{A\to A_1A_2}^d(\proj{\psi_j})\\
 =&\sum_j p_j\hat\rho_j
\end{split}\end{equation}
Employing the joint concavity of fidelity, we have that
\begin{equation}\begin{split}
F(\rho, \tr_{A_2}\Upsilon_{A\to A_1A_2}^d(\rho))=&F(\sum_j p_j\proj{\psi_j},\sum_j p_j\hat\rho_j)
\\
\ge&\sum_jp_j F(\proj{\psi_j},\hat\rho_j)\\
\ge&\sum_jp_j\sqrt{\frac{d+1}{2d}}=\sqrt{\frac{d+1}{2d}},
\end{split}\end{equation}
where the last inequality uses the result in Lemma \ref{pure universal}.
\end{proof}

\begin{theorem}
$\Upsilon_{A\to {A_1A_2}}^d$ has the strongest power for unilocal 2-broadcasting in $d\ox d$ system, i.e.,
\[\max_{\Lambda} \cP(\Lambda) = \cP(\Upsilon_{A\to {A_1A_2}}^d),\]
where the maximum is taken over all symmetric broadcasting channels.
\end{theorem}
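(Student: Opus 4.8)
The plan is to establish two facts and combine them: (i) $\cP(\Upsilon^d_{A\to A_1A_2})=\sqrt{(d+1)/(2d)}$, and (ii) $\cP(\Lambda)\le\sqrt{(d+1)/(2d)}$ for every symmetric broadcasting channel $\Lambda_{A\to A_1A_2}$. The conceptual content is that the maximally entangled state $\Phi_d$ is simultaneously the worst-case input for \emph{every} symmetric broadcasting channel and an input on which $\Upsilon^d$ already attains the best unilocal $2$-broadcasting fidelity, so the worst-case fidelity of any channel cannot exceed what $\Upsilon^d$ achieves on $\Phi_d$.

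First I would pin down $\cP(\Upsilon^d_{A\to A_1A_2})$. The lower bound $\cP(\Upsilon^d)\ge\sqrt{(d+1)/(2d)}$ is exactly Proposition \ref{mixed state universal}. For the matching upper bound, evaluate the infimum in the definition of $\cP(\Upsilon^d)$ at the single input $\rho_{AB}=\Phi_d$ (which lies in $\cS(AB)$ since $|A|=|B|=d$): this gives $\cP(\Upsilon^d)\le F\bigl(\Phi_d,\tr_{\backslash A_1B}\Upsilon^d_{A\to A_1A_2}(\Phi_d)\bigr)$. The right-hand side equals $f_2(\Phi_d)=\sqrt{(d+1)/(2d)}$, because $\Upsilon^d$, whose Choi matrix is the UQCM operator in Eq. (\ref{MES optimal channel}), is precisely the primal-optimal channel exhibited in the proof of the proposition computing $f_2(\Phi_d)$; in particular it is a symmetric broadcasting channel (its Choi matrix is invariant under the swap $W_\pi$, since each $\ket{v_i}$ is symmetric in $A_1A_2$). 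Hence $\cP(\Upsilon^d)=\sqrt{(d+1)/(2d)}$.

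Next, for an arbitrary symmetric broadcasting channel $\Lambda_{A\to A_1A_2}$, again restrict the infimum defining $\cP(\Lambda)$ to the test state $\Phi_d$, obtaining $\cP(\Lambda)\le F\bigl(\Phi_d,\tr_{\backslash A_1B}\Lambda(\Phi_d)\bigr)$. By the characterization of the optimal unilocal $n$-broadcasting fidelity in Eq. (\ref{eq-def-broadcast-fidelity-2}) with $n=2$, the quantity $f_2(\Phi_d)$ is the \emph{maximum} of $F\bigl(\Phi_d,\tr_{\backslash A_1B}\Lambda(\Phi_d)\bigr)$ over all symmetric broadcasting channels, so $F\bigl(\Phi_d,\tr_{\backslash A_1B}\Lambda(\Phi_d)\bigr)\le f_2(\Phi_d)=\sqrt{(d+1)/(2d)}$. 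Therefore $\cP(\Lambda)\le\sqrt{(d+1)/(2d)}=\cP(\Upsilon^d)$ for every such $\Lambda$. Taking the supremum over $\Lambda$ and noting it is attained at $\Upsilon^d$ gives $\max_{\Lambda}\cP(\Lambda)=\cP(\Upsilon^d_{A\to A_1A_2})$, as claimed.

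I do not expect a serious obstacle: the argument is a matter of choosing $\Phi_d$ as the test state in the two infima and quoting the already-established identities $\cP(\Upsilon^d)\ge\sqrt{(d+1)/(2d)}$ (Proposition \ref{mixed state universal}) and $f_2(\Phi_d)=\sqrt{(d+1)/(2d)}$. The only point requiring care is the direction of the inequalities — $\cP$ is an infimum, so a single input only yields an \emph{upper} bound on $\cP(\Lambda)$, while $f_2$ is a supremum, so it \emph{upper}-bounds the fidelity delivered by any fixed channel; both point the right way here. One could additionally remark that $\Lambda\mapsto\cP(\Lambda)$ is upper semicontinuous on the compact set of symmetric broadcasting channels, guaranteeing the maximum is attained, but this is unnecessary since $\Upsilon^d$ is an explicit maximizer.
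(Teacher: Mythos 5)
Your proposal is correct and follows essentially the same route as the paper: both use $\Phi_d$ as the test state to upper-bound $\cP(\Lambda)$ for an arbitrary symmetric broadcasting channel via $f_2(\Phi_d)=\sqrt{(d+1)/(2d)}$, and both invoke Proposition \ref{mixed state universal} to show $\Upsilon^d$ attains this value. Your explicit remark that the upper bound $\cP(\Upsilon^d)\le\sqrt{(d+1)/(2d)}$ is needed to turn the proposition's lower bound into an equality is a small but welcome tightening of the paper's slightly terse citation.
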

\begin{proof}
For any symmetric broadcasting channel $\Lambda_{A\to A_1A_2}$, 
we have 
\begin{equation}\begin{split}
\label{power upper bound}
\cP(\Lambda)
=&\ \inf_{\rho_{AB}\in\cS(AB)}F(\rho_{AB},\tr_{\backslash A_1B}\Lambda(\rho_{AB}))\\
\leq &\ F(\Phi_d,\tr_{\backslash A_1B}\Lambda(\Phi_d))\\
\leq &\ F(\Phi_d,\tr_{\backslash A_1B}\Upsilon_{A\to {A_1A_2}}^d(\Phi_d))\\
= &\ \sqrt{\frac{d+1}{2d}},
\end{split}
\end{equation}
where $\Phi_d$ is the maximally entangled state. The second inequality holds since $\Upsilon_{A\to {A_1A_2}}^d$ is the optimal unilocal 2-broadcasting channel for $\Phi_d$.

For the  unilocal 2-broadcasting operation $\Upsilon_{A\to {A_1A_2}}^d$, from Proposition \ref{mixed state universal}, 
we have that 
\begin{equation}\label{F min upsilon}
\cP(\Upsilon_{A\to {A_1A_2}}^d)=\sqrt{\frac{d+1}{2d}}.
\end{equation}

Combining Eqs. (\ref{power upper bound}) and (\ref{F min upsilon}), it is clear that $\Upsilon_{A\to {A_1A_2}}^d$ maximizes the unilocal broadcasting power $\cP$. Thus, it  is optimal under the setting of universal unilocal 2-broadcasting.

\end{proof}

\subsection{Optimal unilocal broadcasting for two-qubit pure state}

In the following theorem, we give analytical solution of optimal unilocal 2-broadcasting fidelity for two-qubit pure state. Since $f_n$ is invariant under local unitary, we only need to consider two-qubit pure state in the form of $\ket{\psi_\theta} = \cos\theta\ket{00} + \sin\theta\ket{11}$, $\theta \in (0,\pi/4]$ without lose of generality. 
\begin{theorem}
    For two-qubit pure state $\psi_\theta=\proj{\psi_\theta}$ with $\ket{\psi_\theta} = \cos\theta\ket{00} + \sin\theta\ket{11}$, $\theta \in (0,\pi/4]$, its optimal unilocal 2-broadcasting fidelity is given by
\begin{eqnarray*}f_2(\psi_\theta)=
\begin{cases}
 \cos^2 \theta + (\sin^2 \theta)/\sqrt{2}, &\theta \in (0,\arctan (2^{-1/4})]\cr
 (\frac{3}{2}(\cos^4 \theta + \sin^4 \theta))^{1/2}, &\theta \in (\arctan (2^{-1/4}), \pi/4] \end{cases}
\end{eqnarray*}
\end{theorem}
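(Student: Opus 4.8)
Since the target state is pure, $F(\psi_\theta,\widehat\rho)^2=\bra{\psi_\theta}\widehat\rho\ket{\psi_\theta}$, so $f_2(\psi_\theta)=\sqrt{g(\theta)}$, where $g(\theta)$ is the optimal value of the \emph{linear} SDP obtained from (\ref{pure-local-sdp}) by maximizing $\tr\!\big(\psi_{A_1B}\,\tr_{A_2}(J^{T_A}_{AA_1A_2}\psi_{AB})\big)$ over symmetric broadcasting channels $J_{AA_1A_2}$. The plan is to compute $g(\theta)$ exactly by a symmetry reduction, and then to certify the value by exhibiting matching primal and dual feasible solutions in each of the two regimes.

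For the symmetry reduction: $\ket{\psi_\theta}$ is real and is fixed up to a phase by the group $g^{A}_\phi\otimes g^{B}_{-\phi}$ with $g_\phi=\operatorname{diag}(1,e^{i\phi})$. By the local-unitary covariance of $f_2$ recorded after Theorem~\ref{theorem2}, averaging an optimal channel over this group and over complex conjugation gives an optimal channel whose Choi matrix $J$ is real and commutes with $\bar g^{A}_\phi\otimes g^{A_1}_\phi\otimes g^{A_2}_\phi$. Together with the permutation symmetry already imposed on $A_1A_2$, this block-diagonalizes $J$ on $\CC^2\otimes\Sym^2(\CC^2)$ into real blocks of sizes $1,2,2,1$, and the normalization $\tr_{\backslash A}J=\1_A$ collapses to two diagonal scalar equations. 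Hence $g(\theta)$ is a small, explicit SDP, which I would write out in the computational basis.

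Solving that SDP parametrically (via complementary slackness) should reveal two distinct optimal configurations. For small $\theta$ the isometric symmetric broadcasting channel $V_0:\ket0\mapsto\ket{00}$, $\ket1\mapsto\tfrac{1}{\sqrt2}(\ket{01}+\ket{10})$ is optimal, and a short calculation gives $g(\theta)=\cos^4\theta+\tfrac12\sin^4\theta+\sqrt2\cos^2\theta\sin^2\theta=(\cos^2\theta+\tfrac{1}{\sqrt2}\sin^2\theta)^2$, matching the first branch. For $\theta$ near $\pi/4$ the optimum lies on a one-parameter family of symmetric channels — containing the universal copier $\Upsilon^2_{A\to A_1A_2}$ as its member at $\theta=\pi/4$ — with $g(\theta)=\tfrac32(\cos^4\theta+\sin^4\theta)$, the second branch. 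Since $\tfrac32(\cos^4\theta+\sin^4\theta)-(\cos^2\theta+\tfrac{1}{\sqrt2}\sin^2\theta)^2=\tfrac12(\cos^2\theta-\sqrt2\sin^2\theta)^2\ge0$, vanishing exactly at $\tan^2\theta=2^{-1/2}$, i.e.\ $\theta=\arctan(2^{-1/4})$, the second value always dominates wherever its channel family is feasible; the switch at $\arctan(2^{-1/4})$ is precisely the point where that family reaches the boundary of positive semidefiniteness, below which only $V_0$ remains available. The two branches agree at the switch ($f_2=3-\tfrac32\sqrt2$), so the answer is continuous, and at $\theta=\pi/4$ one recovers $f_2(\Phi_2)=\sqrt3/2$, consistent with the earlier value $f_2(\Phi_d)=\sqrt{(d+1)/(2d)}$.

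To turn this into a proof I would, in each regime, exhibit a feasible point $(Y_A,Z_{AA_1A_2})$ of the dual SDP (\ref{pure-local-dual}) with $\sqrt{\tr Y_A}$ equal to the claimed $f_2(\psi_\theta)$; strong duality (the noted Slater point $J=\1/|A|^n$) then forces optimality. For $\theta$ near $\pi/4$ I expect the $d=2$ specialization of the dual certificate used for $f_2(\Phi_d)$ to deform continuously into the needed $(Y_A,Z)$; for small $\theta$ a separate, tighter certificate is required. \textbf{The main obstacle} is this construction step: writing the second-regime primal channel in closed form and, especially, finding Hermitian dual multipliers that remain feasible \emph{and} tight uniformly over $(\arctan(2^{-1/4}),\pi/4]$. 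Once that is in hand the transition value is forced — it is exactly where dual feasibility of the small-$\theta$ certificate (equivalently, positivity of the second-regime channel) breaks down — so confirming it equals $\arctan(2^{-1/4})$ reduces to checking a single eigenvalue sign at the boundary. The symmetry reduction and the trigonometric identities above are routine.
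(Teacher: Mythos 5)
Your strategy is exactly the paper's: reduce to the SDP pair (\ref{pure-local-sdp})/(\ref{pure-local-dual}) and certify the value in each regime with matching primal and dual feasible points. Your first-regime channel $V_0:\ket0\mapsto\ket{00},\ \ket1\mapsto\tfrac{1}{\sqrt2}(\ket{01}+\ket{10})$ is precisely the paper's $J=\proj{v}$ with $\ket v=\ket{000}+\tfrac{1}{\sqrt2}\ket{101}+\tfrac{1}{\sqrt2}\ket{110}$, your branch values and the identity $\tfrac32(\cos^4\theta+\sin^4\theta)-(\cos^2\theta+\tfrac{1}{\sqrt2}\sin^2\theta)^2=\tfrac12(\cos^2\theta-\sqrt2\sin^2\theta)^2$ locating the crossover are correct, and your diagnosis of the transition is borne out: the paper's second-regime primal Choi matrix $J=\proj{v_1}+\proj{v_2}$ has amplitudes $\sqrt{(2\tan^4\theta-1)/6}$ and $\sqrt{(2\cot^4\theta-1)/6}$, which become imaginary exactly below $\theta=\arctan(2^{-1/4})$.

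However, the step you flag as ``the main obstacle'' is the entire content of the proof, and you have not carried it out: without the explicit second-regime primal channel and, above all, the dual feasible pairs $(Y_A,Z_{AA_1A_2})$ in \emph{both} regimes, nothing shown so far rules out a better channel (in particular, the first branch is only a lower bound until a dual certificate with $\tr Y_A=(\cos^2\theta+\tfrac{1}{\sqrt2}\sin^2\theta)^2$ is produced). The symmetry reduction you sketch would shrink the search but does not by itself deliver the certificates. For the record, the paper's certificates are simpler than you fear: in both regimes $Y_A$ is diagonal, $Y_A=p\,\mathrm{diag}(\sqrt2\cos^2\theta,\sin^2\theta)$ with $p=\tfrac12(\sqrt2\cos^2\theta+\sin^2\theta)$ in the first regime and $Y_A=\tfrac32\,\mathrm{diag}(\cos^4\theta,\sin^4\theta)$ in the second (the latter does specialize at $\theta=\pi/4$ to the $d=2$ certificate for $\Phi_d$, as you predicted), while $Z$ is in both cases of the single form $x(\ketbra{000}{110}+\ketbra{110}{000}+\ketbra{001}{111}+\ketbra{111}{001})$ with $x=\sqrt2\,p\sin^2\theta$ and $x=-\tfrac32\sin^2\theta\cos^2\theta$ respectively. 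Verifying feasibility of these and matching objective values is the finite computation your proposal still owes.
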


\begin{proof}
We prove this theorem by explicitly constructing a feasible solution in both primal and dual problem which achieves $f_2(\psi_\theta)$.

\textbf{Case 1:}
   If $\theta \in (0,\arctan (2^{-1/4})]$,
   in the primal problem, we construct feasible solution 
\begin{equation}
\label{optimal universal}
J_{AA_1A_2} = \ket{v}\bra{v},
\end{equation}
where 
\[\ket{v} = \ket{000} + \frac{1}{\sqrt{2}}\ket{101}+\frac{1}{\sqrt{2}}\ket{110}.\]
 
In the dual problem, we construct feasible solution
 \[Y_A = p \begin{pmatrix} \sqrt{2}\cos^2 \theta & 0\\
 0 & \sin^2 \theta \end{pmatrix},\ \text{where}\ p = \frac{\sqrt{2}\cos^2 \theta + \sin^2 \theta}{2}.\]
 \[Z_{AA_1A_2} = x(\ket{000}\bra{110}+ \ket{110}\bra{000}+\ket{001}\bra{111}+\ket{111}\bra{001}),\]\ \text{where} $x = {\sqrt{2}}{p\cdot \sin^2 \theta }$.
 It is easy to check that $J_{AA_1A_2}$ and $\{Y_A, Z_{AA_1A_2}\}$ are feasible solutions to SDP (\ref{pure-local-sdp}) and (\ref{pure-local-dual}).

\textbf{Case 2:}
     If $\theta \in (\arctan (2^{-1/4}), \frac{\pi}{4})$,
     in the primal problem, we construct a feasible solution
     \[J_{AA_1A_2} = \ket{v_1}\bra{v_1} + \ket{v_2}\bra{v_2},\]
     where 
    \[\ket{v_1} = \sqrt{\frac{2\tan^4 \theta -1}{6}} (\ket{001}+\ket{010}) + \sqrt{\frac{4-2\cot^4 \theta}{3}}\ket{111},\]     
    \[\ket{v_2} =   \sqrt{\frac{2\cot^4 \theta-1}{6}}(\ket{101}+\ket{110}) +\sqrt{\frac{4-2\tan^4 \theta}{3}}\ket{000}.\]

     In the dual problem, let us choose
     \[Y_A = \frac{3}{2}\begin{pmatrix}\cos^4 \theta & 0\\ 0& \sin^4 \theta \end{pmatrix},\]
     \[Z_{AA_1A_2} = x(\ket{000}\bra{110}+ \ket{110}\bra{000}+\ket{001}\bra{111}+\ket{111}\bra{001}),\] \text{where} $x = -\frac{3}{2} \sin^2 \theta \cos^2 \theta$.
 It is also easy to check that $J_{AA_1A_2}$ and $\{Y_A, Z_{AA_1A_2}\}$ are feasible solutions to SDP (\ref{pure-local-sdp}) and (\ref{pure-local-dual}).
\end{proof}

From the above proof, we can see that the optimal unilocal 2-broadcasting channel is independent of parameter $\theta$ in the first piece, that is, $\theta \in (0,\arctan (2^{-1/4})]$. We denote this channel as $\Xi$ with Choi matrix $J_{AA_1A_2}$ (\ref{optimal universal}).

We show the difference between fidelity of unilocal 2-broadcasting via channel $\Upsilon$ and $\Xi$, denoted as $f_{2,\Upsilon}(\psi_\theta)$, $f_{2,\Xi}(\psi_\theta)$ respectively, and the optimal unilocal 2-broadcasting fidelity $f_2(\psi_\theta)$ in the following Fig. \ref{fig 1}.

\begin{figure}[htbp]
\centering{\includegraphics[width = 8cm]{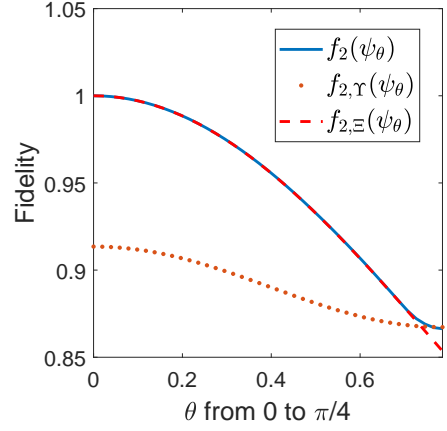}}
  \caption{The solid line depicts the optimal unilocal 2-broadcasting fidelity $f_2(\psi_\theta)$, the dashed line depicts the fidelity of unilocal 2-broadcasting via channel $\Xi$, $f_{2,\Xi}(\psi_\theta)$, which almost coincides with $f_2(\psi_\theta)$ except when $\theta$ is close to $\pi/4$, and the dotted line depicts the fidelity of unilocal 2-broadcasting via channel $\Upsilon$, $f_{2,\Upsilon}(\psi_\theta)$.}
\label{fig 1}
\vspace{-0.5cm}
\end{figure}

\section{Approximate broadcasting of a set of quantum states}
\subsection{Fidelity of  broadcasting a set of quantum states}
The no-go theorem for simultaneously broadcasting quantum states \cite{barnum1996noncommuting} says that we cannot perfectly broadcast two arbitrary noncommuting states. It is natural to ask how well we can do the task approximately. Generally, given $m$ states $\rho_i$ with respective prior probability $p_i$, how large average fidelity can we achieve when broadcasting these states via the same channel? Mathematically, assuming the given states $\rho_i$ are on the system $A$, we study how to optimize the {\em $n$-broadcasting fidelity $g_n(\eta)$ of an ensemble $\eta:=\{p_i,\rho_i\}_{i=1}^m$}, which is defined as

\begin{equation}\begin{split}
g_n(\eta):=\sup&\  \sum_{i=1}^m  \sum_{j=1}^n \frac{1}{n} p_i F(\rho_i,\widehat{\rho}_{ij})  \\
\mathrm{ s.t. }&\ \widehat\rho_{ij}=\tr_{\backslash A_j}\Lambda_{A\to A_1\cdots A_n}(\rho_i),  \\
&\ \Lambda \text{ is a quantum channel}.
\end{split}\end{equation}

Using the idea in the derivation of Eq. (\ref{eq-def-broadcast-fidelity-2}), namely, exploiting the symmetry in the broadcasting channel $\Lambda$, we can simplify this definition. The $n$-broadcasting fidelity $g_n$ of an ensemble $\eta:=\{p_i,\rho_i\}_{i=1}^m$ can be rewritten as
\begin{equation}\begin{split}
g_n(\eta)=\sup&\  \sum_{i=1}^m p_iF(\rho_i,\widehat\rho_{i1})\\
\mathrm{ s.t. }&\ \widehat\rho_{i1}=\tr_{\backslash A_1}\Lambda_{A\to A_1\cdots A_n}(\rho_i),  \\
&\ \Lambda_{} \text{ is a symmetric broadcasting channel}.
\end{split}\end{equation}

\begin{figure}[ht] 
\hspace{-1.cm}
\begin{minipage}{.3\textwidth}
\includegraphics[scale=0.85]{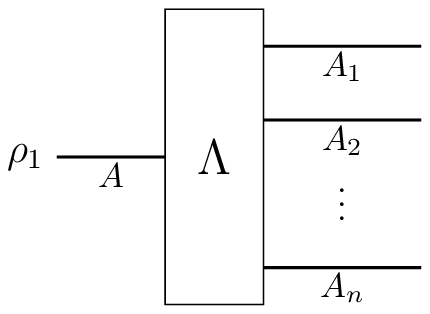}
\label{broadcast-rho}
\end{minipage}%
\hspace{-.9cm}
\begin{minipage}{.25\textwidth}
\includegraphics[scale=0.85]{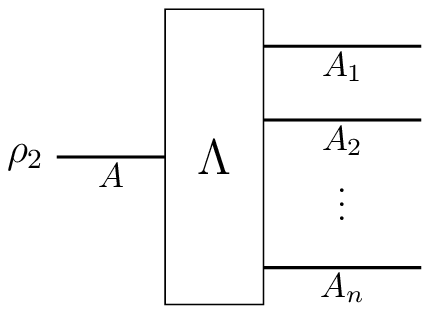}
\label{broadcast-sigma}
\end{minipage}
\caption{Broadcasting states $\rho_1,\rho_2$ via the same channel $\Lambda$.}
\label{fig:broadcast-two-states}
\end{figure}

\begin{theorem}
The $n$-broadcasting fidelity $g_n(\eta)$ of an ensemble $\eta=\{p_i,\rho_i\}_{i=1}^m$ is given by the optimal solution of the SDP in (\ref{broad-two-states-sdp}).

\end{theorem}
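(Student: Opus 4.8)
The plan is to mirror the proof of Theorem~\ref{theorem2}, replacing the single fidelity term by the probability-weighted sum $\sum_i p_i F(\rho_i,\widehat\rho_{i1})$ and dropping the spectator system $B$ (the states $\rho_i$ now live on $A$ alone). First I would invoke the symmetry reduction already carried out in the text just above the statement: since each $W_\pi$ merely permutes the output registers $A_1,\dots,A_n$, for an arbitrary channel $\Lambda$ the averaged channel $\frac{1}{n!}\sum_{\pi\in S_n}W_\pi\Lambda(\cdot)W_\pi^{\dag}$ is a symmetric broadcasting channel achieving the same value of each marginal $\tr_{\backslash A_j}\Lambda(\rho_i)$ up to the relabelling $j\mapsto\pi^{-1}(j)$, hence the same average over $j$. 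Thus it suffices to optimize $g_n(\eta)$ over symmetric broadcasting channels, for which all $n$ summands over $j$ coincide and the objective collapses to $\sum_{i=1}^m p_i F(\rho_i,\widehat\rho_{i1})$ with $\widehat\rho_{i1}=\tr_{\backslash A_1}\Lambda(\rho_i)$.

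Next I would parametrize a symmetric broadcasting channel $\Lambda_{A\to A_1\cdots A_n}$ by its Choi matrix $J_{AA_1\cdots A_n}$: complete positivity and trace preservation are $J_{AA_1\cdots A_n}\ge 0$ and $\tr_{\backslash A}J_{AA_1\cdots A_n}=\1_A$, and the symmetry requirement $\Lambda(\rho)=W_\pi\Lambda(\rho)W_\pi^{\dag}$ for all $\pi$ is, as noted before Theorem~\ref{theorem2}, equivalent to $J_{AA_1\cdots A_n}=\frac{1}{n!}\sum_{\pi\in S_n}W_\pi J_{AA_1\cdots A_n}W_\pi^{\dag}$. Using $\Lambda(\rho_i)=\tr_A(J_{AA_1\cdots A_n}^{T_A}\rho_i)$ and tracing out every output register except $A_1$ gives $\widehat\rho_{i1}=\tr_{\backslash A_1}(J_{AA_1\cdots A_n}^{T_A}\rho_i)$, which is linear in $J_{AA_1\cdots A_n}$. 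So $g_n(\eta)$ is the maximum of $\sum_i p_i F(\rho_i,\widehat\rho_{i1})$ over these three constraints on $J_{AA_1\cdots A_n}$ together with the $m$ linear definitions of $\widehat\rho_{i1}$.

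Finally I would substitute the SDP characterization of fidelity, Eq.~(\ref{sdp-state-fidelity}): introduce one block variable $X_i$ for each $i$ and write $F(\rho_i,\widehat\rho_{i1})=\max\{\tfrac12\tr(X_i+X_i^{\dag}):\ \text{the Hermitian }2\times2\text{ block matrix with diagonal blocks }\rho_i,\widehat\rho_{i1}\text{ and upper-right block }X_i\text{ is }\ge 0\}$. For fixed $J_{AA_1\cdots A_n}$ the inner problem separates into $m$ independent fidelity SDPs, whose optima are simultaneously attainable, so maximizing the weighted sum jointly over $(J_{AA_1\cdots A_n},X_1,\dots,X_m)$ equals $\sum_i p_i F(\rho_i,\widehat\rho_{i1})$ maximized over $J_{AA_1\cdots A_n}$. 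Eliminating $\widehat\rho_{i1}$ via $\widehat\rho_{i1}=\tr_{\backslash A_1}(J_{AA_1\cdots A_n}^{T_A}\rho_i)$ then yields exactly the SDP~(\ref{broad-two-states-sdp}), with objective $\sum_{i=1}^m p_i\cdot\tfrac12\tr(X_i+X_i^{\dag})$, the block positivity constraints for each $i$, and the constraints $J_{AA_1\cdots A_n}\ge 0$, $\tr_{\backslash A}J_{AA_1\cdots A_n}=\1_A$, $J_{AA_1\cdots A_n}=\frac{1}{n!}\sum_{\pi\in S_n}W_\pi J_{AA_1\cdots A_n}W_\pi^{\dag}$. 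Attainment of the supremum defining $g_n(\eta)$ follows from compactness of the channel set and continuity of fidelity, exactly as recorded after Definition~\ref{def-broadcast-fidelity-1}.

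I do not expect a genuine obstacle: the argument is a routine adaptation of Theorem~\ref{theorem2}, and the only point worth a line of care is justifying that "maximize over $J_{AA_1\cdots A_n}$, then over each $X_i$" may be replaced by a single joint maximization — which holds because, for fixed $J_{AA_1\cdots A_n}$, the optimization decouples into $m$ separate fidelity SDPs whose optimal values are achieved independently.
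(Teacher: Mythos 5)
Your proposal is correct and follows essentially the same route as the paper: reduce to symmetric broadcasting channels via the permutation-averaging argument already used for Eq.~(\ref{eq-def-broadcast-fidelity-2}), parametrize the channel by its Choi matrix with the positivity, trace-preservation and permutation-invariance constraints, and substitute the fidelity SDP~(\ref{sdp-state-fidelity}) once per ensemble member. Your extra remark on why the $m$ inner fidelity maximizations decouple for fixed $J_{AA_1\cdots A_n}$ is a small point the paper leaves implicit, but it does not change the argument.
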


\begin{proof}
The output state on system $A_1$ of broadcasting $\rho_i$ is
\begin{equation}
\widehat\rho_{i1}=\tr_{\backslash A_1}\Lambda_{A\to A_1\cdots A_n}(\rho_i)=\tr_{\backslash A_1}(J_{AA_1\cdots A_n}\rho_i^T),
\end{equation}
where $J_{AA_1\cdots A_n}$ is the Choi matrix of $\Lambda_{A\to A_1\cdots A_n}$.

By using the SDP characterization of fidelity function, we then have
\begin{equation}\begin{split}\label{broad-two-states-sdp}
g_n(\eta)=\max&\ \sum_{i=1}^m \frac{1}{2}p_i \tr(X_i+X_i^\dag)\\
\mathrm{ s.t. }&\ \begin{pmatrix} \rho_i  & X_i \\ X_i^{\dag} & \tr_{\backslash A_1}(J_{AA_1\cdots A_n}\rho_i^T) \end{pmatrix}\ge 0, \forall i\in[m],\\
&\  J_{AA_1\cdots A_n}\ge 0,\tr_{\backslash A}J_{AA_1\cdots A_n}=\1_A,\\
&\ J_{AA_1\cdots A_n}=\frac{1}{n!}\sum_{\pi\in S_n} W_\pi J_{AA_1\cdots A_n} W_\pi^\dag,
\end{split}\end{equation}
where $W_\pi$ acts on $A_1\cdots A_n$.
\end{proof}

\section{Conclusions and discussion}
In summary, we have studied the approximate broadcasting of quantum correlations from several aspects. Firstly, we extend the operational characterization of one-sided quantum discord to two-sided one, that is, the asymptotic optimal average mutual information loss after the action of two local broadcasting channels is equal to the two-sided quantum discord. Then we give an alternate derivation for the SDP characterization of the unilocal broadcasting fidelity, based on which we analyze some properties of unilocal broadcasting. We show that the universal quantum clone machine (UQCM) is also the optimal universal unilocal broadcasting operation. Moreover, the optimal state-dependent unilocal broadcasting operation for pure two-qubit states is analytically solved. Finally, we also formulate the broadcasting of a finite set of quantum states as an SDP. 
It would be of interest to study other topics related to broadcasting and correlations, such as the broadcasting of Gaussian state and correlation, and the relation between Gaussian quantum broadcasting and Gaussian quantum discord. One can also study the asymptotic behavior of the $n$-broadcasting fidelity of a finite set of quantum states in the large $n$ limit.

\section*{ACKNOWLEDGEMENTS}
The authors are grateful to Kun Wang for useful discussion. This work was partly supported by the Australian Research Council under Grant Nos. DP120103776 and FT120100449.


\bibliographystyle{apsrev4-1}
\bibliography{bib}

\begin{thebibliography}{54}%
\makeatletter
\providecommand \@ifxundefined [1]{%
 \@ifx{#1\undefined}
}%
\providecommand \@ifnum [1]{%
 \ifnum #1\expandafter \@firstoftwo
 \else \expandafter \@secondoftwo
 \fi
}%
\providecommand \@ifx [1]{%
 \ifx #1\expandafter \@firstoftwo
 \else \expandafter \@secondoftwo
 \fi
}%
\providecommand \natexlab [1]{#1}%
\providecommand \enquote  [1]{``#1''}%
\providecommand \bibnamefont  [1]{#1}%
\providecommand \bibfnamefont [1]{#1}%
\providecommand \citenamefont [1]{#1}%
\providecommand \href@noop [0]{\@secondoftwo}%
\providecommand \href [0]{\begingroup \@sanitize@url \@href}%
\providecommand \@href[1]{\@@startlink{#1}\@@href}%
\providecommand \@@href[1]{\endgroup#1\@@endlink}%
\providecommand \@sanitize@url [0]{\catcode `\\12\catcode `\$12\catcode
  `\&12\catcode `\#12\catcode `\^12\catcode `\_12\catcode `\%12\relax}%
\providecommand \@@startlink[1]{}%
\providecommand \@@endlink[0]{}%
\providecommand \url  [0]{\begingroup\@sanitize@url \@url }%
\providecommand \@url [1]{\endgroup\@href {#1}{\urlprefix }}%
\providecommand \urlprefix  [0]{URL }%
\providecommand \Eprint [0]{\href }%
\providecommand \doibase [0]{http://dx.doi.org/}%
\providecommand \selectlanguage [0]{\@gobble}%
\providecommand \bibinfo  [0]{\@secondoftwo}%
\providecommand \bibfield  [0]{\@secondoftwo}%
\providecommand \translation [1]{[#1]}%
\providecommand \BibitemOpen [0]{}%
\providecommand \bibitemStop [0]{}%
\providecommand \bibitemNoStop [0]{.\EOS\space}%
\providecommand \EOS [0]{\spacefactor3000\relax}%
\providecommand \BibitemShut  [1]{\csname bibitem#1\endcsname}%
\let\auto@bib@innerbib\@empty
\bibitem [{\citenamefont {Wootters}\ and\ \citenamefont
  {Zurek}(1982)}]{wootters1982single}%
  \BibitemOpen
  \bibfield  {author} {\bibinfo {author} {\bibfnamefont {W.~K.}\ \bibnamefont
  {Wootters}}\ and\ \bibinfo {author} {\bibfnamefont {W.~H.}\ \bibnamefont
  {Zurek}},\ }\href@noop {} {\bibfield  {journal} {\bibinfo  {journal}
  {Nature}\ }\textbf {\bibinfo {volume} {299}},\ \bibinfo {pages} {802}
  (\bibinfo {year} {1982})}\BibitemShut {NoStop}%
\bibitem [{\citenamefont {Dieks}(1982)}]{dieks1982communication}%
  \BibitemOpen
  \bibfield  {author} {\bibinfo {author} {\bibfnamefont {D.}~\bibnamefont
  {Dieks}},\ }\href@noop {} {\bibfield  {journal} {\bibinfo  {journal} {Physics
  Letters A}\ }\textbf {\bibinfo {volume} {92}},\ \bibinfo {pages} {271}
  (\bibinfo {year} {1982})}\BibitemShut {NoStop}%
\bibitem [{\citenamefont {Barnum}\ \emph {et~al.}(1996)\citenamefont {Barnum},
  \citenamefont {Caves}, \citenamefont {Fuchs}, \citenamefont {Jozsa},\ and\
  \citenamefont {Schumacher}}]{barnum1996noncommuting}%
  \BibitemOpen
  \bibfield  {author} {\bibinfo {author} {\bibfnamefont {H.}~\bibnamefont
  {Barnum}}, \bibinfo {author} {\bibfnamefont {C.~M.}\ \bibnamefont {Caves}},
  \bibinfo {author} {\bibfnamefont {C.~A.}\ \bibnamefont {Fuchs}}, \bibinfo
  {author} {\bibfnamefont {R.}~\bibnamefont {Jozsa}}, \ and\ \bibinfo {author}
  {\bibfnamefont {B.}~\bibnamefont {Schumacher}},\ }\href@noop {} {\bibfield
  {journal} {\bibinfo  {journal} {Physical Review Letters}\ }\textbf {\bibinfo
  {volume} {76}},\ \bibinfo {pages} {2818} (\bibinfo {year}
  {1996})}\BibitemShut {NoStop}%
\bibitem [{\citenamefont {Piani}\ \emph {et~al.}(2008)\citenamefont {Piani},
  \citenamefont {Horodecki},\ and\ \citenamefont {Horodecki}}]{piani2008no}%
  \BibitemOpen
  \bibfield  {author} {\bibinfo {author} {\bibfnamefont {M.}~\bibnamefont
  {Piani}}, \bibinfo {author} {\bibfnamefont {P.}~\bibnamefont {Horodecki}}, \
  and\ \bibinfo {author} {\bibfnamefont {R.}~\bibnamefont {Horodecki}},\
  }\href@noop {} {\bibfield  {journal} {\bibinfo  {journal} {Physical Review
  Letters}\ }\textbf {\bibinfo {volume} {100}},\ \bibinfo {pages} {090502}
  (\bibinfo {year} {2008})}\BibitemShut {NoStop}%
\bibitem [{\citenamefont {Luo}\ and\ \citenamefont
  {Sun}(2010)}]{luo2010decomposition}%
  \BibitemOpen
  \bibfield  {author} {\bibinfo {author} {\bibfnamefont {S.}~\bibnamefont
  {Luo}}\ and\ \bibinfo {author} {\bibfnamefont {W.}~\bibnamefont {Sun}},\
  }\href@noop {} {\bibfield  {journal} {\bibinfo  {journal} {Physical Review
  A}\ }\textbf {\bibinfo {volume} {82}},\ \bibinfo {pages} {012338} (\bibinfo
  {year} {2010})}\BibitemShut {NoStop}%
\bibitem [{\citenamefont {Luo}(2010)}]{luo2010quantum}%
  \BibitemOpen
  \bibfield  {author} {\bibinfo {author} {\bibfnamefont {S.}~\bibnamefont
  {Luo}},\ }\href@noop {} {\bibfield  {journal} {\bibinfo  {journal} {Letters
  in Mathematical Physics}\ }\textbf {\bibinfo {volume} {92}},\ \bibinfo
  {pages} {143} (\bibinfo {year} {2010})}\BibitemShut {NoStop}%
\bibitem [{\citenamefont {Piani}(2017)}]{piani2017local}%
  \BibitemOpen
  \bibfield  {author} {\bibinfo {author} {\bibfnamefont {M.}~\bibnamefont
  {Piani}},\ }in\ \href@noop {} {\emph {\bibinfo {booktitle} {Lectures on
  General Quantum Correlations and their Applications}}}\ (\bibinfo
  {publisher} {Springer},\ \bibinfo {year} {2017})\ pp.\ \bibinfo {pages}
  {201--216}\BibitemShut {NoStop}%
\bibitem [{\citenamefont {Piani}(2016)}]{piani2016hierarchy}%
  \BibitemOpen
  \bibfield  {author} {\bibinfo {author} {\bibfnamefont {M.}~\bibnamefont
  {Piani}},\ }\href@noop {} {\bibfield  {journal} {\bibinfo  {journal}
  {Physical Review Letters}\ }\textbf {\bibinfo {volume} {117}},\ \bibinfo
  {pages} {080401} (\bibinfo {year} {2016})}\BibitemShut {NoStop}%
\bibitem [{\citenamefont {Boyd}\ and\ \citenamefont
  {Vandenberghe}(2004)}]{boyd2004convex}%
  \BibitemOpen
  \bibfield  {author} {\bibinfo {author} {\bibfnamefont {S.}~\bibnamefont
  {Boyd}}\ and\ \bibinfo {author} {\bibfnamefont {L.}~\bibnamefont
  {Vandenberghe}},\ }\href@noop {} {\emph {\bibinfo {title} {Convex
  optimization}}}\ (\bibinfo  {publisher} {Cambridge university press},\
  \bibinfo {year} {2004})\BibitemShut {NoStop}%
\bibitem [{\citenamefont {Doherty}\ \emph {et~al.}(2002)\citenamefont
  {Doherty}, \citenamefont {Parrilo},\ and\ \citenamefont
  {Spedalieri}}]{Doherty2002}%
  \BibitemOpen
  \bibfield  {author} {\bibinfo {author} {\bibfnamefont {A.~C.}\ \bibnamefont
  {Doherty}}, \bibinfo {author} {\bibfnamefont {P.~A.}\ \bibnamefont
  {Parrilo}}, \ and\ \bibinfo {author} {\bibfnamefont {F.~M.}\ \bibnamefont
  {Spedalieri}},\ }\href@noop {} {\bibfield  {journal} {\bibinfo  {journal}
  {Physical Review Letters}\ }\textbf {\bibinfo {volume} {88}},\ \bibinfo
  {pages} {187904} (\bibinfo {year} {2002})}\BibitemShut {NoStop}%
\bibitem [{\citenamefont {Rains}(2001)}]{Rains2001}%
  \BibitemOpen
  \bibfield  {author} {\bibinfo {author} {\bibfnamefont {E.~M.}\ \bibnamefont
  {Rains}},\ }\href@noop {} {\bibfield  {journal} {\bibinfo  {journal} {IEEE
  Transactions on Information Theory}\ }\textbf {\bibinfo {volume} {47}},\
  \bibinfo {pages} {2921} (\bibinfo {year} {2001})}\BibitemShut {NoStop}%
\bibitem [{\citenamefont {Wang}\ and\ \citenamefont
  {Duan}(2016{\natexlab{a}})}]{Wang2016}%
  \BibitemOpen
  \bibfield  {author} {\bibinfo {author} {\bibfnamefont {X.}~\bibnamefont
  {Wang}}\ and\ \bibinfo {author} {\bibfnamefont {R.}~\bibnamefont {Duan}},\
  }\href@noop {} {\bibfield  {journal} {\bibinfo  {journal} {Physical Review
  A}\ }\textbf {\bibinfo {volume} {94}},\ \bibinfo {pages} {050301} (\bibinfo
  {year} {2016}{\natexlab{a}})}\BibitemShut {NoStop}%
\bibitem [{\citenamefont {Jain}\ \emph {et~al.}(2011)\citenamefont {Jain},
  \citenamefont {Ji}, \citenamefont {Upadhyay},\ and\ \citenamefont
  {Watrous}}]{jain2011qip}%
  \BibitemOpen
  \bibfield  {author} {\bibinfo {author} {\bibfnamefont {R.}~\bibnamefont
  {Jain}}, \bibinfo {author} {\bibfnamefont {Z.}~\bibnamefont {Ji}}, \bibinfo
  {author} {\bibfnamefont {S.}~\bibnamefont {Upadhyay}}, \ and\ \bibinfo
  {author} {\bibfnamefont {J.}~\bibnamefont {Watrous}},\ }\href@noop {}
  {\bibfield  {journal} {\bibinfo  {journal} {Journal of the ACM (JACM)}\
  }\textbf {\bibinfo {volume} {58}},\ \bibinfo {pages} {30} (\bibinfo {year}
  {2011})}\BibitemShut {NoStop}%
\bibitem [{\citenamefont {Wang}\ and\ \citenamefont
  {Duan}(2016{\natexlab{b}})}]{wang2016irreversibility}%
  \BibitemOpen
  \bibfield  {author} {\bibinfo {author} {\bibfnamefont {X.}~\bibnamefont
  {Wang}}\ and\ \bibinfo {author} {\bibfnamefont {R.}~\bibnamefont {Duan}},\
  }\href@noop {} {\bibfield  {journal} {\bibinfo  {journal} {arXiv preprint
  arXiv:1606.09421}\ } (\bibinfo {year} {2016}{\natexlab{b}})}\BibitemShut
  {NoStop}%
\bibitem [{\citenamefont {Berta}\ and\ \citenamefont
  {Tomamichel}(2016)}]{Berta2015}%
  \BibitemOpen
  \bibfield  {author} {\bibinfo {author} {\bibfnamefont {M.}~\bibnamefont
  {Berta}}\ and\ \bibinfo {author} {\bibfnamefont {M.}~\bibnamefont
  {Tomamichel}},\ }\href@noop {} {\bibfield  {journal} {\bibinfo  {journal}
  {IEEE Transactions on Information Theory}\ }\textbf {\bibinfo {volume}
  {62}},\ \bibinfo {pages} {1758} (\bibinfo {year} {2016})}\BibitemShut
  {NoStop}%
\bibitem [{\citenamefont {Li}\ \emph {et~al.}(2017)\citenamefont {Li},
  \citenamefont {Wang},\ and\ \citenamefont {Duan}}]{Li2017}%
  \BibitemOpen
  \bibfield  {author} {\bibinfo {author} {\bibfnamefont {Y.}~\bibnamefont
  {Li}}, \bibinfo {author} {\bibfnamefont {X.}~\bibnamefont {Wang}}, \ and\
  \bibinfo {author} {\bibfnamefont {R.}~\bibnamefont {Duan}},\ }\href@noop {}
  {\bibfield  {journal} {\bibinfo  {journal} {Physical Review A}\ }\textbf
  {\bibinfo {volume} {95}},\ \bibinfo {pages} {052346} (\bibinfo {year}
  {2017})}\BibitemShut {NoStop}%
\bibitem [{\citenamefont {Kempe}\ \emph {et~al.}(2010)\citenamefont {Kempe},
  \citenamefont {Regev},\ and\ \citenamefont {Toner}}]{kempe2010unique}%
  \BibitemOpen
  \bibfield  {author} {\bibinfo {author} {\bibfnamefont {J.}~\bibnamefont
  {Kempe}}, \bibinfo {author} {\bibfnamefont {O.}~\bibnamefont {Regev}}, \ and\
  \bibinfo {author} {\bibfnamefont {B.}~\bibnamefont {Toner}},\ }\href@noop {}
  {\bibfield  {journal} {\bibinfo  {journal} {SIAM Journal on Computing}\
  }\textbf {\bibinfo {volume} {39}},\ \bibinfo {pages} {3207} (\bibinfo {year}
  {2010})}\BibitemShut {NoStop}%
\bibitem [{\citenamefont {Wang}\ \emph {et~al.}(2016)\citenamefont {Wang},
  \citenamefont {Xie},\ and\ \citenamefont {Duan}}]{wang2016semidefinite}%
  \BibitemOpen
  \bibfield  {author} {\bibinfo {author} {\bibfnamefont {X.}~\bibnamefont
  {Wang}}, \bibinfo {author} {\bibfnamefont {W.}~\bibnamefont {Xie}}, \ and\
  \bibinfo {author} {\bibfnamefont {R.}~\bibnamefont {Duan}},\ }\href@noop {}
  {\bibfield  {journal} {\bibinfo  {journal} {arXiv preprint arXiv:1610.06381}\
  } (\bibinfo {year} {2016})}\BibitemShut {NoStop}%
\bibitem [{\citenamefont {Skrzypczyk}\ \emph {et~al.}(2014)\citenamefont
  {Skrzypczyk}, \citenamefont {Navascu{\'e}s},\ and\ \citenamefont
  {Cavalcanti}}]{skrzypczyk2014quantifying}%
  \BibitemOpen
  \bibfield  {author} {\bibinfo {author} {\bibfnamefont {P.}~\bibnamefont
  {Skrzypczyk}}, \bibinfo {author} {\bibfnamefont {M.}~\bibnamefont
  {Navascu{\'e}s}}, \ and\ \bibinfo {author} {\bibfnamefont {D.}~\bibnamefont
  {Cavalcanti}},\ }\href@noop {} {\bibfield  {journal} {\bibinfo  {journal}
  {Physical Review Letters}\ }\textbf {\bibinfo {volume} {112}},\ \bibinfo
  {pages} {180404} (\bibinfo {year} {2014})}\BibitemShut {NoStop}%
\bibitem [{\citenamefont {Navascu{\'e}s}\ \emph {et~al.}(2008)\citenamefont
  {Navascu{\'e}s}, \citenamefont {Pironio},\ and\ \citenamefont
  {Ac{\'\i}n}}]{navascues2008convergent}%
  \BibitemOpen
  \bibfield  {author} {\bibinfo {author} {\bibfnamefont {M.}~\bibnamefont
  {Navascu{\'e}s}}, \bibinfo {author} {\bibfnamefont {S.}~\bibnamefont
  {Pironio}}, \ and\ \bibinfo {author} {\bibfnamefont {A.}~\bibnamefont
  {Ac{\'\i}n}},\ }\href@noop {} {\bibfield  {journal} {\bibinfo  {journal} {New
  Journal of Physics}\ }\textbf {\bibinfo {volume} {10}},\ \bibinfo {pages}
  {073013} (\bibinfo {year} {2008})}\BibitemShut {NoStop}%
\bibitem [{\citenamefont {Napoli}\ \emph {et~al.}(2016)\citenamefont {Napoli},
  \citenamefont {Bromley}, \citenamefont {Cianciaruso}, \citenamefont {Piani},
  \citenamefont {Johnston},\ and\ \citenamefont {Adesso}}]{Napoli2016}%
  \BibitemOpen
  \bibfield  {author} {\bibinfo {author} {\bibfnamefont {C.}~\bibnamefont
  {Napoli}}, \bibinfo {author} {\bibfnamefont {T.~R.}\ \bibnamefont {Bromley}},
  \bibinfo {author} {\bibfnamefont {M.}~\bibnamefont {Cianciaruso}}, \bibinfo
  {author} {\bibfnamefont {M.}~\bibnamefont {Piani}}, \bibinfo {author}
  {\bibfnamefont {N.}~\bibnamefont {Johnston}}, \ and\ \bibinfo {author}
  {\bibfnamefont {G.}~\bibnamefont {Adesso}},\ }\href@noop {} {\bibfield
  {journal} {\bibinfo  {journal} {Physical Review Letters}\ }\textbf {\bibinfo
  {volume} {116}},\ \bibinfo {pages} {150502} (\bibinfo {year}
  {2016})}\BibitemShut {NoStop}%
\bibitem [{\citenamefont {Ollivier}\ and\ \citenamefont
  {Zurek}(2001)}]{ollivier2001quantum}%
  \BibitemOpen
  \bibfield  {author} {\bibinfo {author} {\bibfnamefont {H.}~\bibnamefont
  {Ollivier}}\ and\ \bibinfo {author} {\bibfnamefont {W.~H.}\ \bibnamefont
  {Zurek}},\ }\href@noop {} {\bibfield  {journal} {\bibinfo  {journal}
  {Physical Review Letters}\ }\textbf {\bibinfo {volume} {88}},\ \bibinfo
  {pages} {017901} (\bibinfo {year} {2001})}\BibitemShut {NoStop}%
\bibitem [{\citenamefont {Henderson}\ and\ \citenamefont
  {Vedral}(2001)}]{henderson2001classical}%
  \BibitemOpen
  \bibfield  {author} {\bibinfo {author} {\bibfnamefont {L.}~\bibnamefont
  {Henderson}}\ and\ \bibinfo {author} {\bibfnamefont {V.}~\bibnamefont
  {Vedral}},\ }\href@noop {} {\bibfield  {journal} {\bibinfo  {journal}
  {Journal of Physics A: Mathematical and General}\ }\textbf {\bibinfo {volume}
  {34}},\ \bibinfo {pages} {6899} (\bibinfo {year} {2001})}\BibitemShut
  {NoStop}%
\bibitem [{\citenamefont {Datta}\ \emph {et~al.}(2008)\citenamefont {Datta},
  \citenamefont {Shaji},\ and\ \citenamefont {Caves}}]{datta2008quantum}%
  \BibitemOpen
  \bibfield  {author} {\bibinfo {author} {\bibfnamefont {A.}~\bibnamefont
  {Datta}}, \bibinfo {author} {\bibfnamefont {A.}~\bibnamefont {Shaji}}, \ and\
  \bibinfo {author} {\bibfnamefont {C.~M.}\ \bibnamefont {Caves}},\ }\href@noop
  {} {\bibfield  {journal} {\bibinfo  {journal} {Physical Review Letters}\
  }\textbf {\bibinfo {volume} {100}},\ \bibinfo {pages} {050502} (\bibinfo
  {year} {2008})}\BibitemShut {NoStop}%
\bibitem [{\citenamefont {Brand{\~a}o}\ \emph {et~al.}(2015)\citenamefont
  {Brand{\~a}o}, \citenamefont {Piani},\ and\ \citenamefont
  {Horodecki}}]{brandao2015generic}%
  \BibitemOpen
  \bibfield  {author} {\bibinfo {author} {\bibfnamefont {F.~G.}\ \bibnamefont
  {Brand{\~a}o}}, \bibinfo {author} {\bibfnamefont {M.}~\bibnamefont {Piani}},
  \ and\ \bibinfo {author} {\bibfnamefont {P.}~\bibnamefont {Horodecki}},\
  }\href@noop {} {\bibfield  {journal} {\bibinfo  {journal} {Nature
  Communications}\ }\textbf {\bibinfo {volume} {6}} (\bibinfo {year}
  {2015})}\BibitemShut {NoStop}%
\bibitem [{\citenamefont {Knill}\ and\ \citenamefont
  {Laflamme}(1998)}]{knill1998power}%
  \BibitemOpen
  \bibfield  {author} {\bibinfo {author} {\bibfnamefont {E.}~\bibnamefont
  {Knill}}\ and\ \bibinfo {author} {\bibfnamefont {R.}~\bibnamefont
  {Laflamme}},\ }\href@noop {} {\bibfield  {journal} {\bibinfo  {journal}
  {Physical Review Letters}\ }\textbf {\bibinfo {volume} {81}},\ \bibinfo
  {pages} {5672} (\bibinfo {year} {1998})}\BibitemShut {NoStop}%
\bibitem [{\citenamefont {Piani}\ \emph {et~al.}(2014)\citenamefont {Piani},
  \citenamefont {Narasimhachar},\ and\ \citenamefont
  {Calsamiglia}}]{piani2014quantumness}%
  \BibitemOpen
  \bibfield  {author} {\bibinfo {author} {\bibfnamefont {M.}~\bibnamefont
  {Piani}}, \bibinfo {author} {\bibfnamefont {V.}~\bibnamefont
  {Narasimhachar}}, \ and\ \bibinfo {author} {\bibfnamefont {J.}~\bibnamefont
  {Calsamiglia}},\ }\href@noop {} {\bibfield  {journal} {\bibinfo  {journal}
  {New Journal of Physics}\ }\textbf {\bibinfo {volume} {16}},\ \bibinfo
  {pages} {113001} (\bibinfo {year} {2014})}\BibitemShut {NoStop}%
\bibitem [{\citenamefont {Boixo}\ \emph {et~al.}(2011)\citenamefont {Boixo},
  \citenamefont {Aolita}, \citenamefont {Cavalcanti}, \citenamefont {Modi},\
  and\ \citenamefont {Winter}}]{boixo2011quantum}%
  \BibitemOpen
  \bibfield  {author} {\bibinfo {author} {\bibfnamefont {S.}~\bibnamefont
  {Boixo}}, \bibinfo {author} {\bibfnamefont {L.}~\bibnamefont {Aolita}},
  \bibinfo {author} {\bibfnamefont {D.}~\bibnamefont {Cavalcanti}}, \bibinfo
  {author} {\bibfnamefont {K.}~\bibnamefont {Modi}}, \ and\ \bibinfo {author}
  {\bibfnamefont {A.}~\bibnamefont {Winter}},\ }\href@noop {} {\bibfield
  {journal} {\bibinfo  {journal} {International Journal of Quantum
  Information}\ }\textbf {\bibinfo {volume} {9}},\ \bibinfo {pages} {1643}
  (\bibinfo {year} {2011})}\BibitemShut {NoStop}%
\bibitem [{\citenamefont {Chuan}\ \emph {et~al.}(2012)\citenamefont {Chuan},
  \citenamefont {Maillard}, \citenamefont {Modi}, \citenamefont {Paterek},
  \citenamefont {Paternostro},\ and\ \citenamefont {Piani}}]{chuan2012quantum}%
  \BibitemOpen
  \bibfield  {author} {\bibinfo {author} {\bibfnamefont {T.}~\bibnamefont
  {Chuan}}, \bibinfo {author} {\bibfnamefont {J.}~\bibnamefont {Maillard}},
  \bibinfo {author} {\bibfnamefont {K.}~\bibnamefont {Modi}}, \bibinfo {author}
  {\bibfnamefont {T.}~\bibnamefont {Paterek}}, \bibinfo {author} {\bibfnamefont
  {M.}~\bibnamefont {Paternostro}}, \ and\ \bibinfo {author} {\bibfnamefont
  {M.}~\bibnamefont {Piani}},\ }\href@noop {} {\bibfield  {journal} {\bibinfo
  {journal} {Physical Review Letters}\ }\textbf {\bibinfo {volume} {109}},\
  \bibinfo {pages} {070501} (\bibinfo {year} {2012})}\BibitemShut {NoStop}%
\bibitem [{\citenamefont {Streltsov}\ \emph {et~al.}(2012)\citenamefont
  {Streltsov}, \citenamefont {Kampermann},\ and\ \citenamefont
  {Bru{\ss}}}]{streltsov2012quantum}%
  \BibitemOpen
  \bibfield  {author} {\bibinfo {author} {\bibfnamefont {A.}~\bibnamefont
  {Streltsov}}, \bibinfo {author} {\bibfnamefont {H.}~\bibnamefont
  {Kampermann}}, \ and\ \bibinfo {author} {\bibfnamefont {D.}~\bibnamefont
  {Bru{\ss}}},\ }\href@noop {} {\bibfield  {journal} {\bibinfo  {journal}
  {Physical Review Letters}\ }\textbf {\bibinfo {volume} {108}},\ \bibinfo
  {pages} {250501} (\bibinfo {year} {2012})}\BibitemShut {NoStop}%
\bibitem [{\citenamefont {Devetak}\ and\ \citenamefont
  {Winter}(2004)}]{devetak2004distilling}%
  \BibitemOpen
  \bibfield  {author} {\bibinfo {author} {\bibfnamefont {I.}~\bibnamefont
  {Devetak}}\ and\ \bibinfo {author} {\bibfnamefont {A.}~\bibnamefont
  {Winter}},\ }\href@noop {} {\bibfield  {journal} {\bibinfo  {journal} {IEEE
  Transactions on Information Theory}\ }\textbf {\bibinfo {volume} {50}},\
  \bibinfo {pages} {3183} (\bibinfo {year} {2004})}\BibitemShut {NoStop}%
\bibitem [{\citenamefont {Cavalcanti}\ \emph {et~al.}(2011)\citenamefont
  {Cavalcanti}, \citenamefont {Aolita}, \citenamefont {Boixo}, \citenamefont
  {Modi}, \citenamefont {Piani},\ and\ \citenamefont
  {Winter}}]{cavalcanti2011operational}%
  \BibitemOpen
  \bibfield  {author} {\bibinfo {author} {\bibfnamefont {D.}~\bibnamefont
  {Cavalcanti}}, \bibinfo {author} {\bibfnamefont {L.}~\bibnamefont {Aolita}},
  \bibinfo {author} {\bibfnamefont {S.}~\bibnamefont {Boixo}}, \bibinfo
  {author} {\bibfnamefont {K.}~\bibnamefont {Modi}}, \bibinfo {author}
  {\bibfnamefont {M.}~\bibnamefont {Piani}}, \ and\ \bibinfo {author}
  {\bibfnamefont {A.}~\bibnamefont {Winter}},\ }\href@noop {} {\bibfield
  {journal} {\bibinfo  {journal} {Physical Review A}\ }\textbf {\bibinfo
  {volume} {83}},\ \bibinfo {pages} {032324} (\bibinfo {year}
  {2011})}\BibitemShut {NoStop}%
\bibitem [{\citenamefont {Madhok}\ and\ \citenamefont
  {Datta}(2011)}]{madhok2011interpreting}%
  \BibitemOpen
  \bibfield  {author} {\bibinfo {author} {\bibfnamefont {V.}~\bibnamefont
  {Madhok}}\ and\ \bibinfo {author} {\bibfnamefont {A.}~\bibnamefont {Datta}},\
  }\href@noop {} {\bibfield  {journal} {\bibinfo  {journal} {Physical Review
  A}\ }\textbf {\bibinfo {volume} {83}},\ \bibinfo {pages} {032323} (\bibinfo
  {year} {2011})}\BibitemShut {NoStop}%
\bibitem [{\citenamefont {Madhok}\ and\ \citenamefont
  {Datta}(2013)}]{madhok2013quantum}%
  \BibitemOpen
  \bibfield  {author} {\bibinfo {author} {\bibfnamefont {V.}~\bibnamefont
  {Madhok}}\ and\ \bibinfo {author} {\bibfnamefont {A.}~\bibnamefont {Datta}},\
  }\href@noop {} {\bibfield  {journal} {\bibinfo  {journal} {International
  Journal of Modern Physics B}\ }\textbf {\bibinfo {volume} {27}},\ \bibinfo
  {pages} {1345041} (\bibinfo {year} {2013})}\BibitemShut {NoStop}%
\bibitem [{\citenamefont {Streltsov}\ \emph {et~al.}(2011)\citenamefont
  {Streltsov}, \citenamefont {Kampermann},\ and\ \citenamefont
  {Bru{\ss}}}]{streltsov2011linking}%
  \BibitemOpen
  \bibfield  {author} {\bibinfo {author} {\bibfnamefont {A.}~\bibnamefont
  {Streltsov}}, \bibinfo {author} {\bibfnamefont {H.}~\bibnamefont
  {Kampermann}}, \ and\ \bibinfo {author} {\bibfnamefont {D.}~\bibnamefont
  {Bru{\ss}}},\ }\href@noop {} {\bibfield  {journal} {\bibinfo  {journal}
  {Physical Review Letters}\ }\textbf {\bibinfo {volume} {106}},\ \bibinfo
  {pages} {160401} (\bibinfo {year} {2011})}\BibitemShut {NoStop}%
\bibitem [{\citenamefont {Girolami}\ \emph {et~al.}(2014)\citenamefont
  {Girolami}, \citenamefont {Souza}, \citenamefont {Giovannetti}, \citenamefont
  {Tufarelli}, \citenamefont {Filgueiras}, \citenamefont {Sarthour},
  \citenamefont {Soares-Pinto}, \citenamefont {Oliveira},\ and\ \citenamefont
  {Adesso}}]{girolami2014quantum}%
  \BibitemOpen
  \bibfield  {author} {\bibinfo {author} {\bibfnamefont {D.}~\bibnamefont
  {Girolami}}, \bibinfo {author} {\bibfnamefont {A.~M.}\ \bibnamefont {Souza}},
  \bibinfo {author} {\bibfnamefont {V.}~\bibnamefont {Giovannetti}}, \bibinfo
  {author} {\bibfnamefont {T.}~\bibnamefont {Tufarelli}}, \bibinfo {author}
  {\bibfnamefont {J.~G.}\ \bibnamefont {Filgueiras}}, \bibinfo {author}
  {\bibfnamefont {R.~S.}\ \bibnamefont {Sarthour}}, \bibinfo {author}
  {\bibfnamefont {D.~O.}\ \bibnamefont {Soares-Pinto}}, \bibinfo {author}
  {\bibfnamefont {I.~S.}\ \bibnamefont {Oliveira}}, \ and\ \bibinfo {author}
  {\bibfnamefont {G.}~\bibnamefont {Adesso}},\ }\href@noop {} {\bibfield
  {journal} {\bibinfo  {journal} {Physical Review Letters}\ }\textbf {\bibinfo
  {volume} {112}},\ \bibinfo {pages} {210401} (\bibinfo {year}
  {2014})}\BibitemShut {NoStop}%
\bibitem [{\citenamefont {Pirandola}(2014)}]{pirandola2014quantum}%
  \BibitemOpen
  \bibfield  {author} {\bibinfo {author} {\bibfnamefont {S.}~\bibnamefont
  {Pirandola}},\ }\href@noop {} {\bibfield  {journal} {\bibinfo  {journal}
  {Scientific reports}\ }\textbf {\bibinfo {volume} {4}} (\bibinfo {year}
  {2014})}\BibitemShut {NoStop}%
\bibitem [{\citenamefont {Modi}\ \emph {et~al.}(2012)\citenamefont {Modi},
  \citenamefont {Brodutch}, \citenamefont {Cable}, \citenamefont {Paterek},\
  and\ \citenamefont {Vedral}}]{modi2012classical}%
  \BibitemOpen
  \bibfield  {author} {\bibinfo {author} {\bibfnamefont {K.}~\bibnamefont
  {Modi}}, \bibinfo {author} {\bibfnamefont {A.}~\bibnamefont {Brodutch}},
  \bibinfo {author} {\bibfnamefont {H.}~\bibnamefont {Cable}}, \bibinfo
  {author} {\bibfnamefont {T.}~\bibnamefont {Paterek}}, \ and\ \bibinfo
  {author} {\bibfnamefont {V.}~\bibnamefont {Vedral}},\ }\href@noop {}
  {\bibfield  {journal} {\bibinfo  {journal} {Reviews of Modern Physics}\
  }\textbf {\bibinfo {volume} {84}},\ \bibinfo {pages} {1655} (\bibinfo {year}
  {2012})}\BibitemShut {NoStop}%
\bibitem [{\citenamefont {Adesso}\ \emph {et~al.}(2016)\citenamefont {Adesso},
  \citenamefont {Bromley},\ and\ \citenamefont
  {Cianciaruso}}]{adesso2016measures}%
  \BibitemOpen
  \bibfield  {author} {\bibinfo {author} {\bibfnamefont {G.}~\bibnamefont
  {Adesso}}, \bibinfo {author} {\bibfnamefont {T.~R.}\ \bibnamefont {Bromley}},
  \ and\ \bibinfo {author} {\bibfnamefont {M.}~\bibnamefont {Cianciaruso}},\
  }\href@noop {} {\bibfield  {journal} {\bibinfo  {journal} {Journal of Physics
  A: Mathematical and Theoretical}\ }\textbf {\bibinfo {volume} {49}},\
  \bibinfo {pages} {473001} (\bibinfo {year} {2016})}\BibitemShut {NoStop}%
\bibitem [{\citenamefont {Streltsov}\ and\ \citenamefont
  {Zurek}(2013)}]{streltsov2013quantum}%
  \BibitemOpen
  \bibfield  {author} {\bibinfo {author} {\bibfnamefont {A.}~\bibnamefont
  {Streltsov}}\ and\ \bibinfo {author} {\bibfnamefont {W.~H.}\ \bibnamefont
  {Zurek}},\ }\href@noop {} {\bibfield  {journal} {\bibinfo  {journal}
  {Physical Review Letters}\ }\textbf {\bibinfo {volume} {111}},\ \bibinfo
  {pages} {040401} (\bibinfo {year} {2013})}\BibitemShut {NoStop}%
\bibitem [{\citenamefont {Nielsen}\ and\ \citenamefont
  {Chuang}(2011)}]{Nielsen:2011:QCQ:1972505}%
  \BibitemOpen
  \bibfield  {author} {\bibinfo {author} {\bibfnamefont {M.~A.}\ \bibnamefont
  {Nielsen}}\ and\ \bibinfo {author} {\bibfnamefont {I.~L.}\ \bibnamefont
  {Chuang}},\ }\href@noop {} {\emph {\bibinfo {title} {Quantum Computation and
  Quantum Information}}},\ \bibinfo {edition} {10th}\ ed.\ (\bibinfo
  {publisher} {Cambridge University Press},\ \bibinfo {address} {New York, NY,
  USA},\ \bibinfo {year} {2011})\BibitemShut {NoStop}%
\bibitem [{\citenamefont {Jamio{\l}kowski}(1972)}]{jamiolkowski1972linear}%
  \BibitemOpen
  \bibfield  {author} {\bibinfo {author} {\bibfnamefont {A.}~\bibnamefont
  {Jamio{\l}kowski}},\ }\href@noop {} {\bibfield  {journal} {\bibinfo
  {journal} {Reports on Mathematical Physics}\ }\textbf {\bibinfo {volume}
  {3}},\ \bibinfo {pages} {275} (\bibinfo {year} {1972})}\BibitemShut {NoStop}%
\bibitem [{\citenamefont {Choi}(1975)}]{choi1975completely}%
  \BibitemOpen
  \bibfield  {author} {\bibinfo {author} {\bibfnamefont {M.-D.}\ \bibnamefont
  {Choi}},\ }\href@noop {} {\bibfield  {journal} {\bibinfo  {journal} {Linear
  Algebra and its Applications}\ }\textbf {\bibinfo {volume} {10}},\ \bibinfo
  {pages} {285} (\bibinfo {year} {1975})}\BibitemShut {NoStop}%
\bibitem [{\citenamefont {Killoran}(2012)}]{killoran2012entanglement}%
  \BibitemOpen
  \bibfield  {author} {\bibinfo {author} {\bibfnamefont {N.}~\bibnamefont
  {Killoran}},\ }\emph {\bibinfo {title} {Entanglement quantification and
  quantum benchmarking of optical communication devices}},\ \href@noop {}
  {Ph.D. thesis} (\bibinfo {year} {2012})\BibitemShut {NoStop}%
\bibitem [{\citenamefont {Watrous}(2012)}]{watrous2012simpler}%
  \BibitemOpen
  \bibfield  {author} {\bibinfo {author} {\bibfnamefont {J.}~\bibnamefont
  {Watrous}},\ }\href@noop {} {\bibfield  {journal} {\bibinfo  {journal} {arXiv
  preprint arXiv:1207.5726}\ } (\bibinfo {year} {2012})}\BibitemShut {NoStop}%
\bibitem [{\citenamefont {Fannes}(1973)}]{fannes1973continuity}%
  \BibitemOpen
  \bibfield  {author} {\bibinfo {author} {\bibfnamefont {M.}~\bibnamefont
  {Fannes}},\ }\href@noop {} {\bibfield  {journal} {\bibinfo  {journal}
  {Communications in Mathematical Physics}\ }\textbf {\bibinfo {volume} {31}},\
  \bibinfo {pages} {291} (\bibinfo {year} {1973})}\BibitemShut {NoStop}%
\bibitem [{\citenamefont {Audenaert}(2007)}]{audenaert2007sharp}%
  \BibitemOpen
  \bibfield  {author} {\bibinfo {author} {\bibfnamefont {K.~M.}\ \bibnamefont
  {Audenaert}},\ }\href@noop {} {\bibfield  {journal} {\bibinfo  {journal}
  {Journal of Physics A: Mathematical and Theoretical}\ }\textbf {\bibinfo
  {volume} {40}},\ \bibinfo {pages} {8127} (\bibinfo {year}
  {2007})}\BibitemShut {NoStop}%
\bibitem [{\citenamefont {Alicki}\ and\ \citenamefont
  {Fannes}(2004)}]{alicki2004continuity}%
  \BibitemOpen
  \bibfield  {author} {\bibinfo {author} {\bibfnamefont {R.}~\bibnamefont
  {Alicki}}\ and\ \bibinfo {author} {\bibfnamefont {M.}~\bibnamefont
  {Fannes}},\ }\href@noop {} {\bibfield  {journal} {\bibinfo  {journal}
  {Journal of Physics A: Mathematical and General}\ }\textbf {\bibinfo {volume}
  {37}},\ \bibinfo {pages} {L55} (\bibinfo {year} {2004})}\BibitemShut
  {NoStop}%
\bibitem [{\citenamefont {Winter}(2016)}]{winter2016tight}%
  \BibitemOpen
  \bibfield  {author} {\bibinfo {author} {\bibfnamefont {A.}~\bibnamefont
  {Winter}},\ }\href@noop {} {\bibfield  {journal} {\bibinfo  {journal}
  {Communications in Mathematical Physics}\ }\textbf {\bibinfo {volume}
  {347}},\ \bibinfo {pages} {291} (\bibinfo {year} {2016})}\BibitemShut
  {NoStop}%
\bibitem [{\citenamefont {Watrous}(2011)}]{watrous2011theory}%
  \BibitemOpen
  \bibfield  {author} {\bibinfo {author} {\bibfnamefont {J.}~\bibnamefont
  {Watrous}},\ }\href@noop {} {\bibfield  {journal} {\bibinfo  {journal}
  {University of Waterloo Fall}\ }\textbf {\bibinfo {volume} {128}} (\bibinfo
  {year} {2011})}\BibitemShut {NoStop}%
\bibitem [{\citenamefont {Bu{\v{z}}ek}\ and\ \citenamefont
  {Hillery}(1996)}]{buvzek1996quantum}%
  \BibitemOpen
  \bibfield  {author} {\bibinfo {author} {\bibfnamefont {V.}~\bibnamefont
  {Bu{\v{z}}ek}}\ and\ \bibinfo {author} {\bibfnamefont {M.}~\bibnamefont
  {Hillery}},\ }\href@noop {} {\bibfield  {journal} {\bibinfo  {journal}
  {Physical Review A}\ }\textbf {\bibinfo {volume} {54}},\ \bibinfo {pages}
  {1844} (\bibinfo {year} {1996})}\BibitemShut {NoStop}%
\bibitem [{\citenamefont {Bu{\v{z}}ek}\ and\ \citenamefont
  {Hillery}(1998)}]{buvzek1998universal}%
  \BibitemOpen
  \bibfield  {author} {\bibinfo {author} {\bibfnamefont {V.}~\bibnamefont
  {Bu{\v{z}}ek}}\ and\ \bibinfo {author} {\bibfnamefont {M.}~\bibnamefont
  {Hillery}},\ }\href@noop {} {\bibfield  {journal} {\bibinfo  {journal}
  {Physical Review Letters}\ }\textbf {\bibinfo {volume} {81}},\ \bibinfo
  {pages} {5003} (\bibinfo {year} {1998})}\BibitemShut {NoStop}%
\bibitem [{\citenamefont {Scarani}\ \emph {et~al.}(2005)\citenamefont
  {Scarani}, \citenamefont {Iblisdir}, \citenamefont {Gisin},\ and\
  \citenamefont {Acin}}]{scarani2005quantum}%
  \BibitemOpen
  \bibfield  {author} {\bibinfo {author} {\bibfnamefont {V.}~\bibnamefont
  {Scarani}}, \bibinfo {author} {\bibfnamefont {S.}~\bibnamefont {Iblisdir}},
  \bibinfo {author} {\bibfnamefont {N.}~\bibnamefont {Gisin}}, \ and\ \bibinfo
  {author} {\bibfnamefont {A.}~\bibnamefont {Acin}},\ }\href@noop {} {\bibfield
   {journal} {\bibinfo  {journal} {Reviews of Modern Physics}\ }\textbf
  {\bibinfo {volume} {77}},\ \bibinfo {pages} {1225} (\bibinfo {year}
  {2005})}\BibitemShut {NoStop}%
\bibitem [{\citenamefont {Fan}\ \emph {et~al.}(2014)\citenamefont {Fan},
  \citenamefont {Wang}, \citenamefont {Jing}, \citenamefont {Yue},
  \citenamefont {Shi}, \citenamefont {Zhang},\ and\ \citenamefont
  {Mu}}]{fan2014quantum}%
  \BibitemOpen
  \bibfield  {author} {\bibinfo {author} {\bibfnamefont {H.}~\bibnamefont
  {Fan}}, \bibinfo {author} {\bibfnamefont {Y.-N.}\ \bibnamefont {Wang}},
  \bibinfo {author} {\bibfnamefont {L.}~\bibnamefont {Jing}}, \bibinfo {author}
  {\bibfnamefont {J.-D.}\ \bibnamefont {Yue}}, \bibinfo {author} {\bibfnamefont
  {H.-D.}\ \bibnamefont {Shi}}, \bibinfo {author} {\bibfnamefont {Y.-L.}\
  \bibnamefont {Zhang}}, \ and\ \bibinfo {author} {\bibfnamefont {L.-Z.}\
  \bibnamefont {Mu}},\ }\href@noop {} {\bibfield  {journal} {\bibinfo
  {journal} {Physics Reports}\ }\textbf {\bibinfo {volume} {544}},\ \bibinfo
  {pages} {241} (\bibinfo {year} {2014})}\BibitemShut {NoStop}%
\end{thebibliography}%

\end{document}